\newtheorem{definition}{Definition}
\newtheorem{theorem}{Theorem}
\newtheorem{lemma}{Lemma}
\newtheorem{remark}{Remark}
\newtheorem{example}{Example}
\def\beginmat{ \left( \begin{array} }
\def\endmat{ \end{array} \right) }
\def\log{{\rm log}}
\def\tr{{\rm tr}}
\DeclareMathOperator{\E}{E}
\DeclareMathOperator{\Var}{Var}
\DeclareMathOperator*{\argmax}{argmax} % always above/below
\DeclareMathOperator{\Length}{length}
 \newcommand{\noop}[1]{}
\numberwithin{equation}{section}
\theoremstyle{plain}
\newtheorem{thm}{Theorem}[section]
\begin{document}

\begin{frontmatter}
\title{Jointly Robust Prior for Gaussian Stochastic Process in Emulation, Calibration and Variable Selection\thanksref{T1}}
\runtitle{Jointly Robust Prior}
%\thankstext{T1}{Footnote to the title with the ``thankstext'' command.}

\begin{aug}
\author{\fnms{Mengyang} \snm{Gu}\thanksref{addr1}\ead[label=e1]{mgu6@jhu.edu}}
%\author{\fnms{Second} \snm{Author}\thanksref{addr1,t3,m1,m2}\ead[label=e2]{second@somewhere.com}}
%\and
%\author{\fnms{Third} \snm{Author}\thanksref{addr2,t1,m2}
%\ead[label=e3]{third@somewhere.com}
%\ead[label=u1,url]{http://www.foo.com}}

\runauthor{Mengyang Gu}

\address[addr1]{Department of Applied Mathematics and Statistics, Johns Hopkins University, Baltimore, MD 
    \printead{e1} % print email address of "e1"
    %\printead*{e2}
}

%\address[addr2]{Address of the Third author
%    Usually a few lines long
%    Usually a few lines long
%    \printead{e3}
%    \printead{u1}
%}
%
%\thankstext{t1}{Some comment}
%\thankstext{t2}{First supporter of the project}
%\thankstext{t3}{Second supporter of the project}

\end{aug}
%when the marginal posterior mode is used for
\begin{abstract}
 Gaussian stochastic process (GaSP) has been widely used in two fundamental problems in uncertainty quantification, namely the emulation and calibration of mathematical models. Some objective priors,  such as the reference prior, are studied in the context of emulating (approximating) computationally expensive mathematical models.  In this work, we introduce a new class of priors, called the jointly robust prior, for both the emulation and calibration.
This prior is designed to maintain various advantages from the reference prior. In emulation, the jointly robust prior has an appropriate tail decay rate as the reference prior, and is computationally simpler than the reference prior in parameter estimation. Moreover, the marginal posterior mode estimation with the jointly robust prior can separate the influential and inert inputs in mathematical models, while the reference prior does not have this property. We establish the posterior propriety for a large class of priors in calibration, including the reference prior and jointly robust prior in general scenarios, but the jointly robust prior is preferred because the calibrated mathematical model typically predicts the reality well. The jointly robust prior is used as the default prior in two new R packages, called ``RobustGaSP" and ``RobustCalibration",  available on CRAN for emulation and calibration, respectively. 
 % can be used as a shrinkage method for separating the influential and inert inputs in a mathematical model. The
 %Parameter estimation is known to be challenging in these contexts
 %  The  jointly robust prior help identify the parameters, and is thus preferred in calibration. Numerical examples will be studied using the jointly robust prior and its alternatives. 
 \end{abstract}

 %Parameter estimation in the Gaussian stochastic process is known to be challenging, and some studies arise  
 
%The abstract should summarize the contents of the paper. It 
%should be clear, descriptive, self-explanatory and not longer
%than 200 words. It should also be suitable for publication in
%abstracting services. Please avoid using math formulas as 
%much as possible.
%
%This is a sample input file.  Comparing it with the output it
%generates can show you how to produce a simple document 
%of your own.
%\end{abstract}

%\begin{keyword}[class=MSC]
%\kwd[Primary ]{}
%\kwd{}
%\kwd[; secondary ]{}
%\end{keyword}

\begin{keyword}
%\kwd{sample}
%\kwd{\LaTeXe}
\kwd{computer model}
\kwd{posterior propriety}
\kwd{reference prior}
\kwd{tail rate}
\end{keyword}

\end{frontmatter}

\section{Introduction}
%Uncertainty quantification (UQ) is an emerging field, which involves the analysis and redu interaction of mathematics, statistics and data. A central part of the UQ is the use of the mathematical model, some of which are implemented as computer code, to reproduce the behavior of the complicated process in science and engineering [PPGaSP and more]. 
 
% Science has advanced to a point where various phenomena in biology, physics, chemistry and engineering can be approximately described by a system of mathematical models or equations. It is vital to understand how these models can be improved by synthesizing the experimental data and statistics, a process which has come to be called uncertainty quantification (UQ). 
 
 %in biology, physics, chemistry and engineering
 
% a computationally expensive computer model is modeled as an unknown function via a GaSP, where any finite outputs of the function are then assumed to follow a multivariate normal distribution.   
%approximately
%such that the reality can be better explained by the mathematical model and discrepancy function together
A central part of the modern uncertainty quantification (UQ) is to describe the natural and social phenomena by a system of mathematical models or equations. Some mathematical models are implemented as computer code in an effort to reproduce the behavior of complicated processes in science and engineering. These mathematical models are called {computer models} or {simulators}, which map a set of inputs such as initial conditions and model parameters to a real valued output. 

Many computer models are prohibitively slow, and it is thus vital to develop a fast statistical surrogate to emulate (approximate) the outcomes of the computer models, based on the runs at a set of pre-specified design inputs. This problem is often referred as the emulation problem. Another fundamental problem in UQ is called {the inverse problem} or {calibration}, where the field data are used to estimate the unobservable calibration parameters in the mathematical model. As the mathematical model can be imprecise to describe the reality, it is usual to address the misspecification by a discrepancy function. Emulation and calibration are the main focus in many recent studies in UQ  \citep{bayarri2007framework, higdon2008computer, Bayarri09,   liu2009modularization, conti2010bayesian}. 

%A Gaussian stochastic process (GaSP) is prevalent in modeling spatially correlated data and nonparameteric regression. In this work, we focus on the use of GaSP on the emulation and calibration problems. The GaSP model has achieved a great success in emulating many expensive computer models []  
%
%% has been studies in previous literature [] 
%
%modeled as an unknown function via a GaSP, where any finite outputs of the function are then assumed to follow a multivariate normal distribution....Mention the challenge in estimating parameters. 
%modeling spatially correlated data and nonparameteric regression \citep{banerjee2014hierarchical,gelfand2010handbook,rasmussen2006gaussian}. In this work, we focus on GaSP models for the emulation and calibration problems. 
%. The GaSP model has become one of well-studied paradigm in emulation for several reasons \citep{sacks1989design,bastos2009diagnostics}

A Gaussian stochastic process (GaSP) is prevalent for emulating expensive computer model \citep{sacks1989design,bastos2009diagnostics} for several reasons. First of all, many computer models are deterministic, or close to being deterministic, and thus the emulator is often required to be an interpolator, meaning that the predictions by the emulator are equal to the outputs at the design inputs. The GaSP emulator is an interpolator, and can easily be adapted to emulate the stochastic computer model outputs by adding a noise. Second, the number of runs of the computer model used to construct a GaSP emulator is often relatively small, which is roughly $n\approx 10p$ by the ``folklore" notion, where $p$ is the dimension of the inputs. Third, the GaSP emulator has an internal assessment of the accuracy in prediction, which allows  the uncertainty to propagate through the emulator. The GaSP is also widely used to model the discrepancy function in calibration \citep{kennedy2001bayesian, bayarri2007framework}, as combining the calibrated computer model and discrepancy function can typically improve the predictive accuracy than the prediction using the computer model alone.

%In both problems, the result depend critically on the parameter estimation of the GaSP model. 

%Lastly, the input space of the computer
%model (such as the initial conditions and parameters) often has multiple dimensions and completely different scales, where the product correlation function

%which is a challenging question \citep{kennedy2001bayesian}.
%The performance of a GaSP model in emulation and calibration depends critically on the parameter estimation of the GaSP model, especially the estimated parameters in the covariance function. 
%(e.g. the initial conditions and parameters)
% the isotropy, which is often assumed in a spatial model, is often too restrictive to model the input correlation in computer models.  Instead
%with different correlation parameters for each coordinate of the input
The GaSP model used in emulation and calibration is rather different than the one in modeling spatially correlated data. The key difference is that the input space of the computer model usually has multiple dimensions and completely different scales.  The isotropic assumption is thus too restrictive. Instead, for any $\mathbf x_a, \mathbf x_b \in \mathcal X$ with $p_x$ dimensions, one often assumes a product correlation (\cite{sacks1989design})
\begin{equation}
c(\mathbf x_a, \mathbf x_b)=\prod^{p_x}_{l=1}c_l( x_{al},  x_{bl}), 
\label{equ:correlation}
\end{equation}
where each $c_l$ is a one-dimensional isotropic correlation function for the $l$th coordinate of the input, each typically having an unknown range parameter $\gamma_l$ and fixed roughness parameter $\alpha_l$,  $l=1,...,{p_x}$.  This choice of the correlation will be used herein due to its flexibility in modeling correlation and tractability in computation. 

%The product correlation is flexible in modeling different scales of correlation in each coordinate of the input space and tractable in computation. Thus
%The product correlation function will be assumed herein due to its flexibility and computational feasibility. 
%especially the estimated parameters in the covariance function
The performance of a GaSP model in emulation and calibration depends critically on the parameter estimation of the GaSP model. For the emulation problem, it's been recognized in many studies that some routinely used methods, such as the maximum likelihood estimator (MLE), produce unstable estimates of the correlation parameters \citep{oakley1999bayesian,lopes2011development}. The instability in parameter estimation results in a great loss of the predictive accuracy, as the covariance matrix is estimated to be near-singular or near-diagonal. This problem is partly overcome by the use of the reference prior \citep{berger2001objective,paulo2005default}, where the marginal posterior mode estimation under certain parameterizations eliminates these two unwelcome scenarios (\cite{Gu2018robustness}).

Other than the reference prior, many proper and improper priors were previously studied for the GaSP model in emulation and calibration,  often with a product form with various parameterizations, including the inverse range parameter $\beta_l=1/\gamma_l$, natural logarithm of the inverse range parameter $\xi_l=\log(\beta_l)$, and correlation parameter $\rho_l=1/\exp(\beta_l)$,  $l=1,..,{p_x}$.   For example, $\pi(\beta_l)\propto 1/\beta_l$ was utilized in \cite{kennedy2001bayesian} and $\pi(\beta_l)\propto{1}/{(1+\beta^2_l)}$ was assumed in \cite{conti2010bayesian}. An independent beta prior for $\rho_l$ is utilized in \cite{higdon2008computer}, and the spike and Slab prior for the same parameterization is used in \cite{savitsky2011variable}. Though eliciting the prior information has been discussed in the literature (\cite{oakley2002eliciting}), it is rather hard to faithfully transform subjective prior knowledge to the GaSP model with the product correlation function in (\ref{equ:correlation}). 

In this work,  we propose a new class of priors, called the jointly robust (JR) prior, for both the emulation problem and calibration problem.  This prior maintains most of the  advantages of the reference prior in emulation, and it has a closed-form normalizing constant, moments and derivatives. In comparison, although the computational operations of the reference prior is normally acceptable, the derivative of the reference prior is more computationally expensive.  In practice, the numerical derivatives of the reference prior are often used for the marginal posterior mode estimation, which slows down the computation. Moreover, the prior moments and the normalizing constant of range parameters by the reference prior are unknown and even hard to compute, because of the near-singular correlation matrix when all range parameters are large.   

%However, The closed form derivative of the JR prior is particularly useful for the emulation problem, because 

%the Newton algorithm to find the marginal posterior mode estimation requires the derivatives

In the calibration problem,  we establish  the posterior propriety for the calibration problem of a wide class of priors, including the reference prior and JR prior in general scenarios. The identifiability problem of the calibration parameters was found in many previous studies (\cite{arendt2012quantification,tuo2015efficient}), partly due to the large correlation estimated by the data (\cite{gu2017improved}).  Though the posteriors of the reference prior and JR prior are shown to be proper for the calibration problem in this work, the density of the JR prior has slightly larger slope than the density of the reference prior when the range parameters in the covariance function get large, preventing the correlation from being estimated to be too large. Numerical results of the advantages of using the JR prior against the reference prior in calibration will be discussed.  Two R packages, called ``RobustGaSP" and ``RobustCalibration", are developed for the emulation and calibration problems, and the jointly robust prior is used as the default choice in both packages (\cite{gu2018robustgasp,gu2018robustcalibrationpackage}).

Furthermore, another advantage of the JR prior is that it can identify the inert inputs efficiently through the marginal posterior mode of a full model, whereas the mode with the reference prior does not have this feature. The inert inputs are the ones that barely affect the outputs of the computer model. Having inert inputs is a fairly common scenario with computer models. E.g. In TITAN2D computer model for simulating volcanic eruption \citep{Bayarri09},  the internal friction angle has a negligible effect on the output. In emulation, having an inert input can sometimes result in worse prediction than simply omitting them and in calibration, one may hope to spend more efforts in calibrating the influential inputs than the inert inputs. The full Bayesian variable selection of the inputs in a computer model is often prohibitively slow, as each evaluation of the likelihood is computationally expensive,  whereas the marginal posterior mode by the JR prior  is much faster for identifying the inert inputs, discussed in Section \ref{subsec:JR_selection}.

Compared to other frequently used priors other than the reference prior, the new class of priors studied in this work is not a product of marginal priors of the range parameter or its transformation. The advantage is that the marginal posterior mode estimation with the new prior is both robust and useful in identifying the inert inputs in the mathematical model. Using a product of marginal priors of the parameters in the covariance function may not achieve both properties at the same time.

The paper is organized as follows. In Section~\ref{sec:GaSP}, we review the GaSP model in emulation and calibration, exploring the benefit of the reference prior in emulation that were not noticed before. A general theorem about the posterior propriety is also derived in the calibration setting.  In Section~\ref{sec:JR_prior}, we introduce the JR prior, and compare with the reference prior in calibration and emulation. The variable selection problem is introduced in Section~\ref{sec:variable_selection}. The numerical studies of using the JR prior for emulation, variable selection and calibration will be discussed in Section~\ref{sec:numerical}.  We conclude the paper in Section~\ref{sec:conclusion}.

\section{Gaussian stochastic process model} 
\label{sec:GaSP}
In this section, we first shortly introduce the GaSP model in Section~\ref{subsec:background}. The model will be extended for emulation and calibration  in Section \ref{subsec:GP_emulator} and Section \ref{subsec:GP_calibration}, respectively.  The posterior propriety will also be studied in the calibration problem in Section \ref{subsec:GP_calibration}.

%In   Section \ref{subsec:variable_selection}, we review the current approach for  the sensitivity analysis and  variable selection for  inputs in mathematical models. 

%The advantages of the reference prior in emulation will be analyzed in Section~\ref{subsec:GP_emulator}.

\subsection{Background}
\label{subsec:background}
To begin with, consider a stationary Gaussian stochastic process $y(\cdot)\in \mathbb R$ on a ${p_x}$-dimensional input space $\mathcal X$, 
 \begin{equation}
 y(\cdot)\sim \mbox{GaSP}(\mu(\cdot), \, \sigma^2 c(\cdot, \cdot))
 \label{equ:GaSP}
 \end{equation}
where $\mu(\cdot)$ and $\sigma^2 c(\cdot, \cdot)$ are the mean and covariance functions, respectively. Any marginal distribution $(y(\mathbf x_1),...,y(\mathbf x_n))^T$ follows a multivariate distribution,
\[(y(\mathbf x_1),...,y(\mathbf x_n))^T \sim \mbox{MN}(\bm \mu, \sigma^2 \mathbf R ), \]
where $\bm \mu=(\mu(\mathbf x_1),...,\mu(\mathbf x_n))^T$ is an $n$-dimensional vector of the mean, and $\sigma^2 \mathbf R$ is an $n\times n$ covariance matrix with the $(i,\,j)$ entry being $\sigma^2c(\cdot,\,\cdot)$, where $c(\cdot, \cdot)$ is a correlation function.

 The mean function for any input $\mathbf x \in \mathcal X$ is typically  modeled via the regression
\begin{equation}
\mu(\mathbf x)=  \mathbf h(\mathbf x) \bm \theta_{m}=\sum^{q}_{t=1}  h_t(\mathbf x)  \theta_{mt},
\label{equ:mean}
\end{equation}
  where $\mathbf h(\mathbf x)= (h_1(\mathbf x),...,h_q(\mathbf x))$ is a row vector of the mean basis functions and $ \theta_{mt}$ is the unknown regression parameter of the basis function $h_t(\cdot)$, for $t=1,...,q$, with $q$ being the number of the mean basis specified in the model. 
  
%  with $q$ being the number of basis functions or regressors, and $\bm \theta_m=( \theta_{m1},..., \theta_{mq})^T$ is a vector of unknown mean parameters. 

The product correlation function in (\ref{equ:correlation}) is assumed and thus the correlation matrix is $\mathbf R=\mathbf R_1 \circ \mathbf R_2 \circ ... \circ \mathbf R_{p_x}$,  where $\circ$ is the Hadamard product.  The $(i,j)$ entry of $\mathbf R_l$ is parameterized by $c_l(\cdot, \cdot)$, a one dimensional correlation function for the $l$th coordinate of the input, $l=1,...,{p_x}$. We focus on two classes of widely used correlation functions: the power exponential correlation and Mat{\'e}rn correlation. Define $d_l=|x_{al}-x_{bl}|$ for  any $\mathbf x_a, \mathbf x_b \in \mathcal X$. The power exponential correlation has the form
\begin{equation}
c_l(d_l)=\exp\left\{- \left(\frac{d_l}{\gamma_l} \right)^{\alpha_l}\right\},
\label{equ:pow_exp}
\end{equation}
where $\gamma_l$ is an unknown nonnegative range parameter to be estimated and $\alpha_l \in(0,2]$ is a fixed roughness parameter, often chosen to be a value close to 2 to avoid the numerical problem when $\alpha_l=2$ \citep{Bayarri09,Gu2016PPGaSP}. 

 The Mat{\'e}rn correlation has the following form 
 \begin{equation}
 c_l(d_l)=\frac{1}{2^{\alpha_l-1}\Gamma(\alpha_l)}\left(\frac{d_l}{\gamma_l} \right)^{\alpha_l} \mathcal K_{\alpha_l} \left(\frac{d_l}{\gamma_l} \right),
 \label{equ:matern}
 \end{equation}
 where  $\Gamma(\cdot)$ is the gamma function, $\mathcal{K}_{\alpha_l}(\cdot)$ is the modified Bessel function of the second kind with the range parameter and roughness parameter  being $\gamma_l$ and $\alpha_l$, respectively. The Mat{\'e}rn correlation has a closed-form expression when $\alpha_l=\frac{2k_l+1}{2}$ with $k_l\in \mathbb N$, and becomes the exponential correlation and Gaussian correlation, when $k_l=0$ and $k_l\to \infty$, respectively. Though we focus on these two classes of correlation functions, the results are applicable to other different correlation functions shown in \cite{Gu2018robustness}. 

\subsection{GaSP emulator and the reference prior}
\label{subsec:GP_emulator}
 The goal of emulation is to predict and assess the uncertainty on the real-valued output of a computationally expensive computer model, denoted as $f^M(\cdot)$, based on a finite number of chosen inputs, often selected to fill the input domain $\mathcal X$, e.g. the Latin Hypercube Design \citep{sacks1989design,santner2003design}. Let us model the unknown function $f^M(\cdot)$ via a GaSP defined in (\ref{equ:GaSP}). Denote the outputs of the computer model $\mathbf f^M=(f^M(\mathbf x_1),...f^M(\mathbf x_n))^T$  at $n$ chosen inputs $\{\mathbf x_1,...,\mathbf x_n\}$. Conditional on $ \mathbf f^M$, the GaSP emulator is to predict and quantify the uncertainty of the output at $\mathbf x^*$  by the predictive distribution of $f^M(\mathbf x^*)$.
%the outputs of the initial runs 

%Though the mean parameters and variance parameter are easy to be marginalized out, the posterior of the parameters $\bm \gamma$ in the correlation function does not have a closed form posterior distribution. 

The GaSP emulator typically consists of the mean parameters, variance parameter and range parameters, denoted as $(\bm{\theta}_m ,{\sigma}^2,\bm {\gamma})$. The reference prior for the GaSP model with the product correlation was developed in \cite{paulo2005default} and the form is given by
%used in many recent studies for the emulation problem \citep{Bayarri09,Gu2016PPGaSP}. The form is given by
\begin{equation}
 {\pi }^R(\bm{\theta}_m ,{\sigma}^2,\bm {\gamma} ) \propto  \frac{\pi^R(\bm \gamma)}{{\sigma }^{2}} \,,
 \label{equ:refprior}
 \end{equation}
%The property of the marginal likelihood with a general  One can do estimation correlation parameter by the mode of Equation~\ref{equ:mp}. The is corresponding to Type II marginal likelihood estimation[]. Note that a direct uniform prior for $\gamma$ will usually lead to improper distribution in GaSP model []
with ${\pi ^R}(\bm{\gamma} )\propto |{\mathbf I^{*}}(\bm{\gamma} ){|^{1/2}}$, where $\mathbf I^{*}(\cdot)$ is the expected Fisher information matrix as below
\begin{equation}
 \label{equ:efi}
 {\mathbf I^*}({\bm \gamma} ) = {\left( {\begin{array}{*{20}{c}}
   {n - q} & {\tr({\mathbf W_1})} & {\tr({\mathbf W_2})} & {...} & {\tr({\mathbf W_{p_x}})}  \\
   {} & {\tr(\mathbf W_1^2)} & {\tr({\mathbf W_1}{\mathbf W_2})} & {...} & {\tr({\mathbf W_1}{\mathbf W_{p_x}})}  \\
   {} & {} & {\tr(\mathbf W_2^2)} & {...} & {\tr({\mathbf W_2}{\mathbf W_{p_x}})}  \\
   {} & {} & {} &  \ddots  &  \vdots   \\
   {} & {} & {} & {} & {\tr(\mathbf W_{p_x}^2)}
 \end{array} } \right) },
 \end{equation}
where ${\mathbf W_l} = {\dot {{\mathbf R}}_l}\mathbf Q$, for $1\leq  l \leq {p_x}$, and ${\dot {{\mathbf R}} _l}$ is  the {{partial}} derivative of  the correlation matrix $\mathbf R$  with respect to the $l$th range parameter, and $\mathbf { Q} =   \mathbf { R}^{ - 1} -  \mathbf { R}^{ - 1} {\mathbf H} ({{\mathbf H^T }{ \mathbf R^{ - 1}}{\mathbf H } })^{ - 1} \mathbf H^T {{\mathbf { R}}^{ - 1}}$, with $ \mathbf H=(\mathbf h^T(\mathbf x_1), ..., \mathbf h^T(\mathbf x_n))^T$.

%The reference prior has been generalized to the case with a nugget in [], and it is worth noting that the prior ${\pi ^R}(\bm{\gamma}, \eta)$ is typically proper \citep{Gu2018robustness}, as long as the intercept is contained in the mean basis matrix $ \mathbf H:=(\mathbf h^T(\mathbf x_1), ..., \mathbf h^T(\mathbf x_n))^T$, i.e. $\mathbf 1_n \in \mathcal C(\mathbf H)$, where $\mathcal C(\mathbf H)$ denotes the  the column space of the mean basis matrix $ \mathbf H$.

After marginalizing out $(\bm \theta_m, \sigma^2)$ by the prior in (\ref{equ:refprior}), the marginal likelihood will be denoted as ${L}(\bm{\gamma } \mid \mathbf y )$.  As each evaluation of the likelihood requires the inversion of the covariance matrix, which is generally at the order of $O(n^3)$, the full Bayesian computation through the Markov Chain Monte Carlo (MCMC) is typically prohibitive. It is common to simply estimate $\bm \gamma$ by the marginal posterior mode in emulation
 \begin{equation}
 ({\hat \gamma}_1, \ldots {\hat \gamma}_{p_x})=  \mathop{\argmax }\limits_{ \gamma_1,\ldots, \gamma_{p_x}} \left\{ L( { \gamma_1}, \ldots, {\gamma_{p_x}} \mid \mathbf{y} )  \, \pi^R({\gamma_1}, \ldots, {\gamma_{p_x}})\right\}.
 \label{equ:est_gamma}
 \end{equation}

Some routinely used estimators, such as the maximum likelihood estimator (MLE) and maximum marginal likelihood estimator (MMLE) with regard to ${L}(\bm{\gamma } \mid \mathbf y )$ have been found to be unstable in estimating the range parameters  in various studies (see e.g. Figure 2 in \cite{li2005analysis}, Figure 2.2 in \cite{lopes2011development} and Figure 1 in \cite{Gu2018robustness}). The problem is often caused by the estimated correlation matrix being near-diagonal  ($\mathbf {\hat R} \approx \mathbf I_n$, where $\mathbf I_n$ is the identity matrix of size $n$) or being near-singular ($\mathbf {\hat R} \approx \mathbf 1_n \mathbf 1^T_n$). In both scenarios, the problems are caused by the estimation of the range parameters. More specifically, as shown in Lemma 3.3. in \cite{Gu2018robustness}, the profile likelihood function may not decrease when any $\gamma_l \to 0$, $l=1,...,p_x$, which sometimes results in the estimated correlation matrix being near diagonal, whereas the marginal likelihood may not decrease when any $\gamma_l \to 0$ or all $\gamma_l \to 0$, $l=1,...,p_x$, leading to the near-diagonal correlation matrix or  near-singular correlation matrix, respectively. Thus, the robust estimation of the  parameters is defined as avoiding these two possible problems, as follows.
  \begin{definition} (Robust Estimation.)
 Estimation of the parameters in the GaSP is called robust, if neither $ \hat {\mathbf R}= \mathbf 1_n \mathbf 1^T_n$ nor  $\hat {\mathbf R} =\mathbf I_n $, where $\hat {\mathbf R}$ is the estimated correlation matrix.
%  the following two situations do NOT happen:
%\begin{itemize}
%\item[(i)]  $ \hat {\mathbf R}= \mathbf 1_n \mathbf 1^T_n$,
%\item[(ii)] $\hat {\mathbf R} =\mathbf I_n $,
%\end{itemize}
%where $\hat {\mathbf R}$ is the estimated correlation matrix.
\label{def:robustness}
\end{definition}
%the estimator with robust estimation tends to behave better in predictions, while the estimator without this property, such as MLE or MMLE, could lead to inferior predictive results
%The predictions with the marginal posterior mode estimation in (\ref{equ:est_gamma}) are shown in the right panel in Figure~\ref{fig:eg1}, which is better than the ones by MLE and MMLE.

It is shown in \cite{Gu2018robustness} that the marginal posterior mode estimation with the reference prior is robust under $ \gamma$ or $ \xi=\log(1/ \gamma)$ parameterization, while some other alternatives, such as the MLE and MMLE, do not have this property.  Note that the near-diagonal estimation ($ \hat {\mathbf R}\approx \mathbf I_n$)  can easily happen for ${p_x}>1$ when a product correlation structure is used because, if any of the matrices in the product correlation matrix is near-diagonal, the correlation matrix will be near-diagonal. Thus, using the maximum marginal posterior mode estimation with the  reference prior is particularly helpful, when the dimension of the input is larger than 1. 

%avoids this case, as the prior decreases when any 

 % It is worth mentioning that his can easily happen w

% when the product correlation function in (\ref{equ:correlation}) is used, the near-diagonal estimation ($ \hat {\mathbf R}= \mathbf 1_n \mathbf 1^T_n$) happens very often in emulation as the input is often multi-dimensional and the correlation matrix becomes near-diagonal even if one $\hat \gamma_l \approx 0$, for any $l=1,...,p$. 

%Though the posterior propriety  is irrelevant when the marginal posterior mode estimator of $\bm \gamma$ is used to replace the full posterior, while 

\begin{figure}[t]
\centering
  \begin{tabular}{ccc}
    \includegraphics[height=.3\textwidth,width=.325\textwidth]{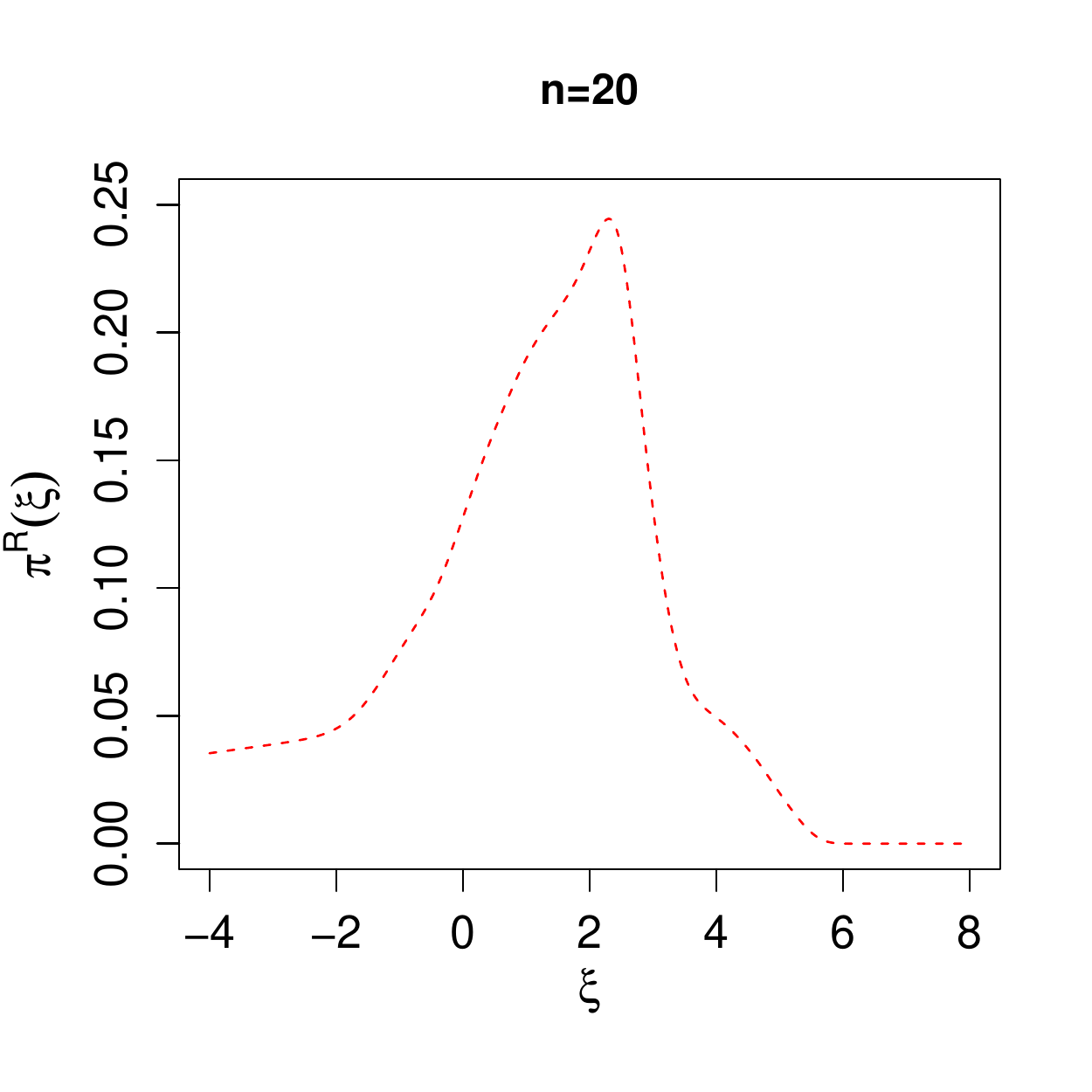}
        \includegraphics[height=.3\textwidth,width=.325\textwidth]{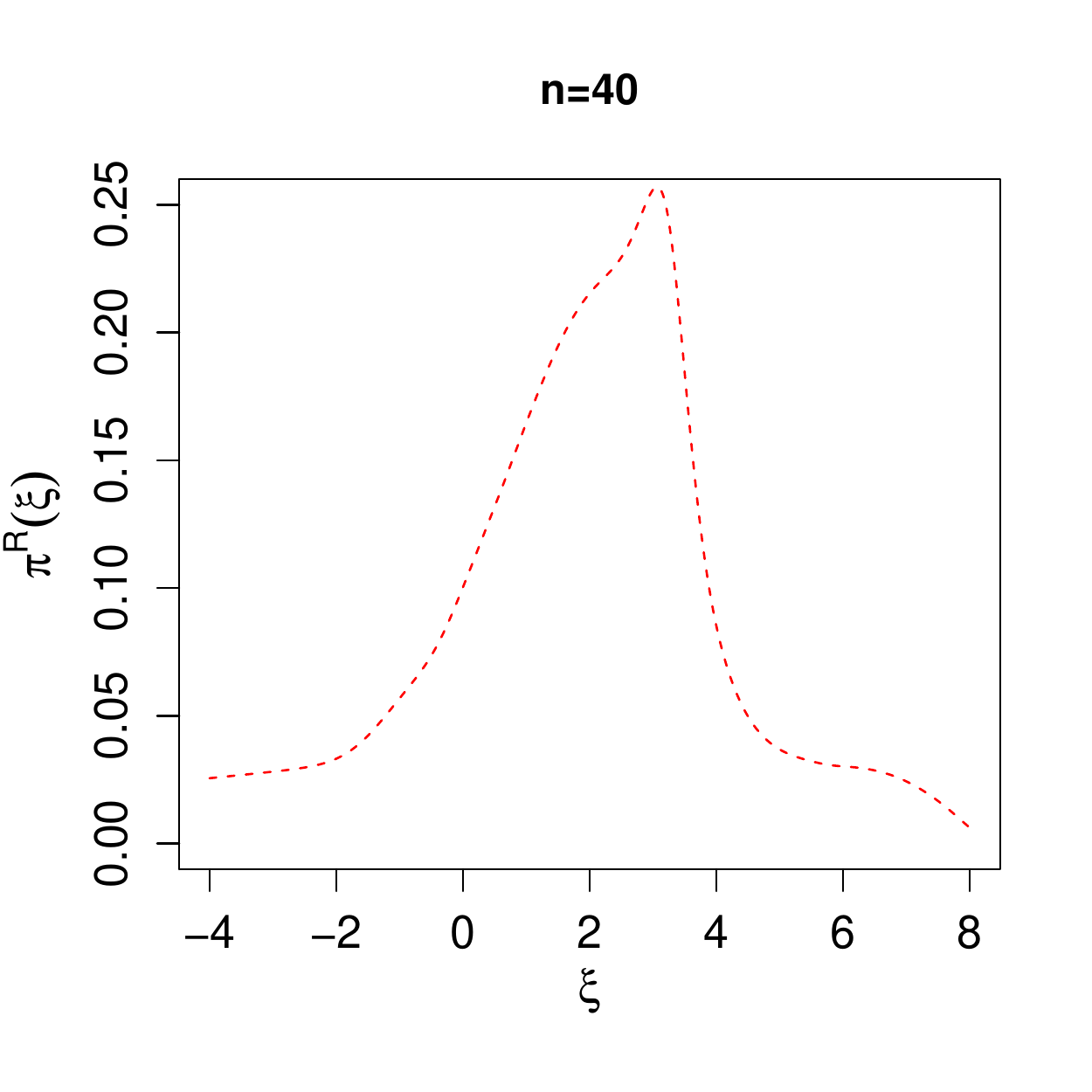}
                \includegraphics[height=.3\textwidth,width=.325\textwidth]{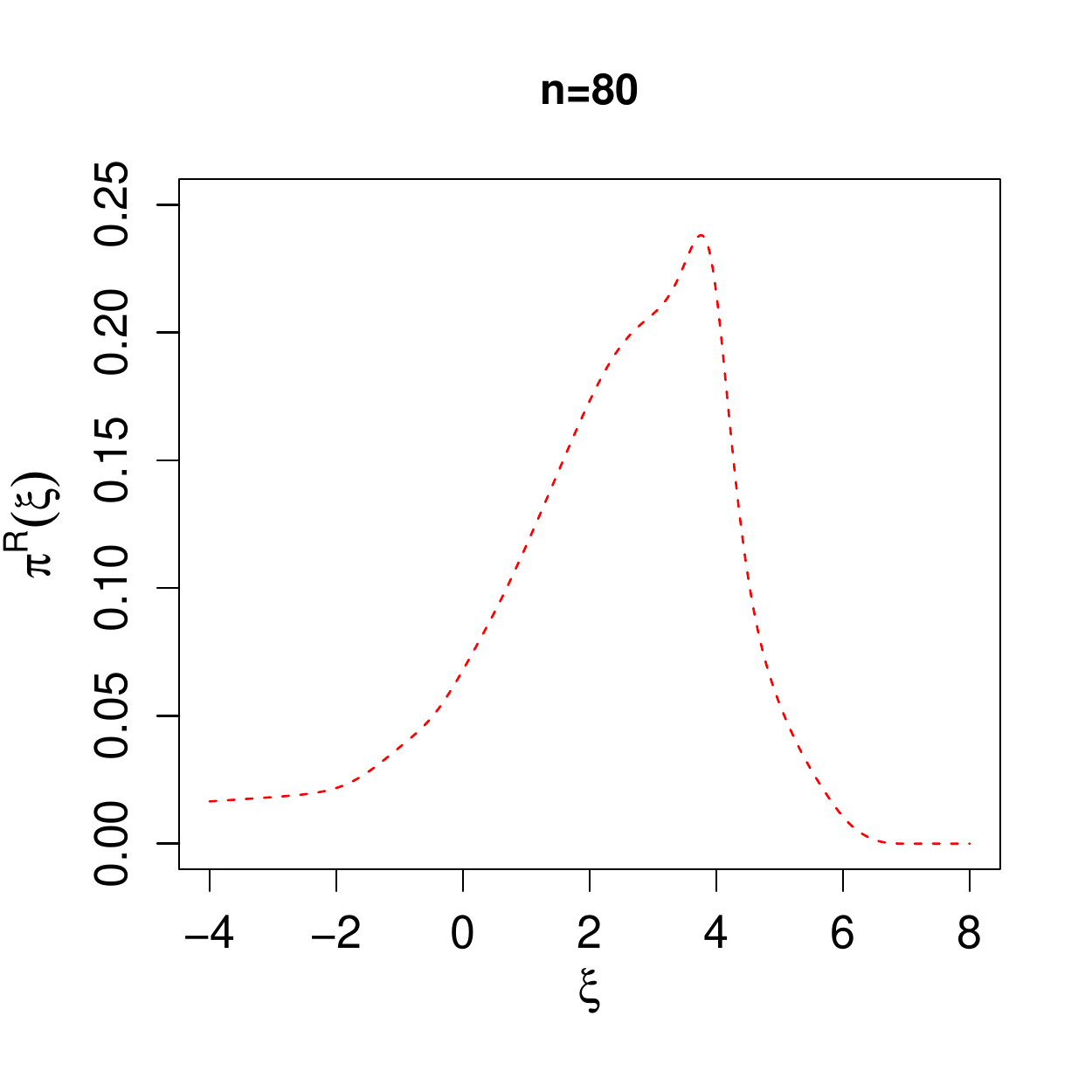} \vspace{-.15in}\\
    \includegraphics[height=.3\textwidth,width=.314\textwidth]{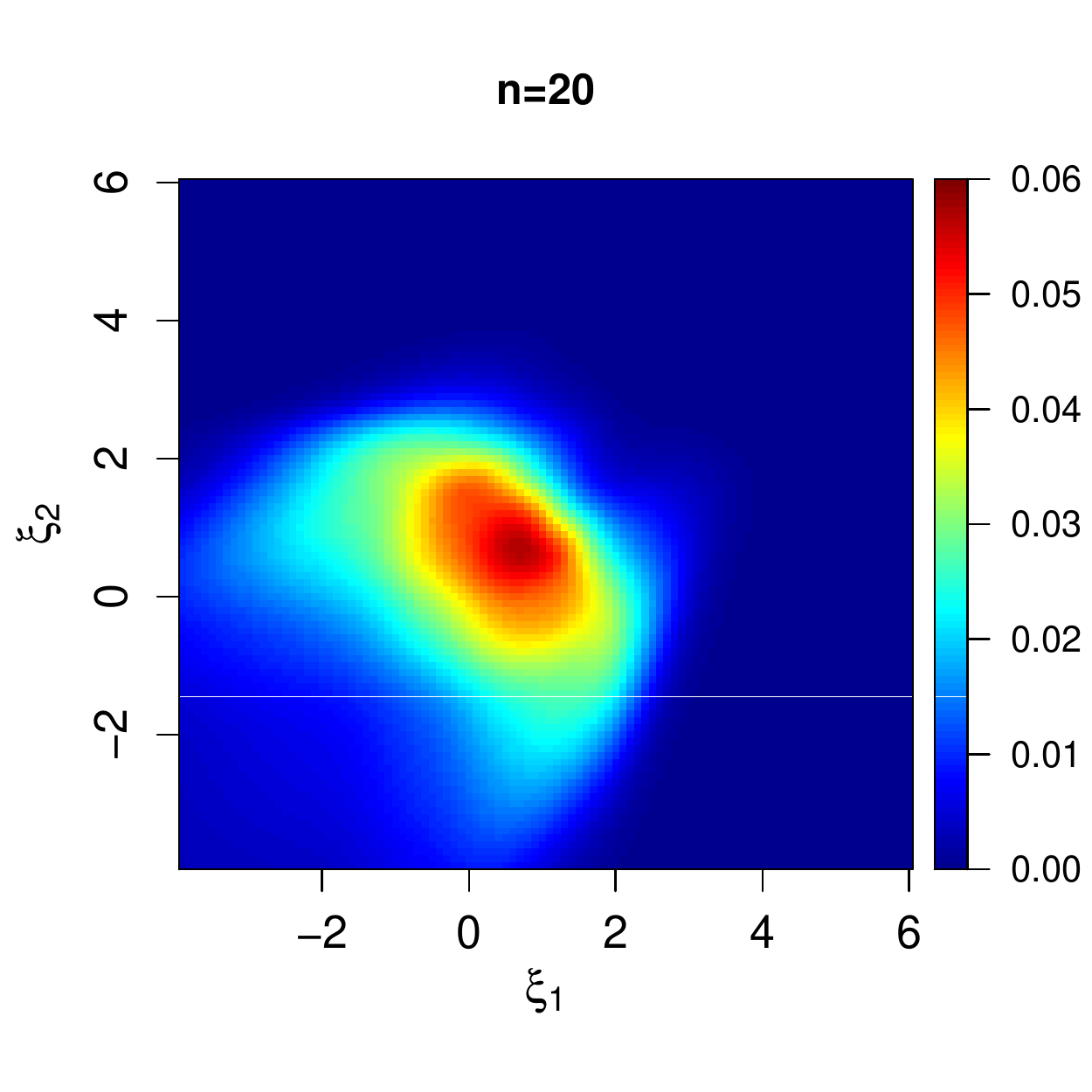}
        \includegraphics[height=.3\textwidth,width=.314\textwidth]{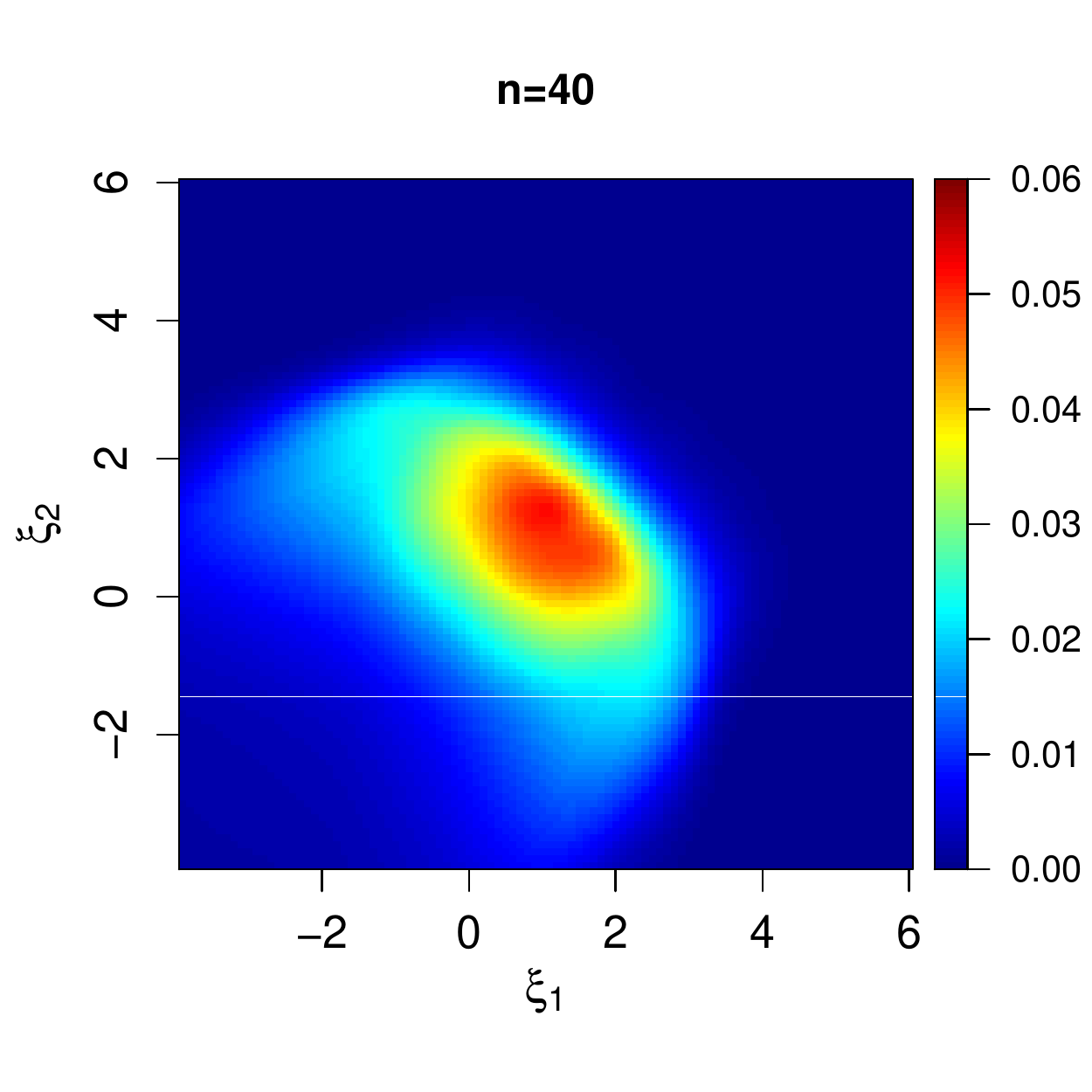}
                \includegraphics[height=.3\textwidth,width=.314\textwidth]{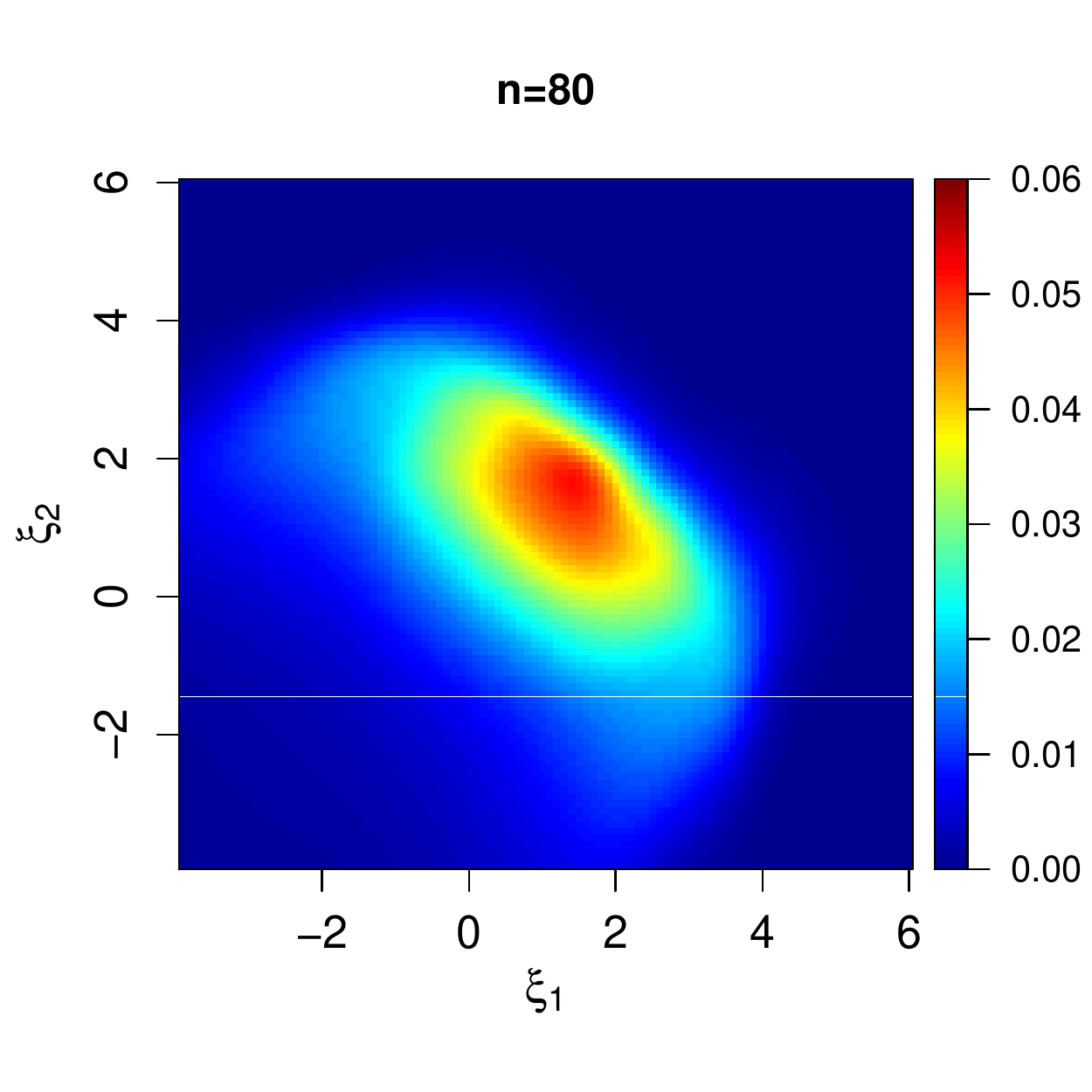}
  \end{tabular}
%  \vspace{-.2in}
   \caption{Density of the reference prior of the log inverse range parameter (up to the normalizing constants). The power exponential correlation function in (\ref{equ:pow_exp}) is assumed where $\alpha_l=1.9$, $1\leq l\leq {p_x}$, with ${p_x}=1$ (upper panels) and ${p_x}=2$ (lower panels). From left to right, the number of design points are $n=20$, $n=40$ and $n=80$, respectively, all generated from a maximin Latin Hypercube (LHD) on $[0,1]^{p_x}$ (\cite{santner2003design}). For all the panels, we assume $\mathbf H=\mathbf 1_n$.   \vspace{-.1in}}
\label{fig:reference_prior_1dim_2dim}
\end{figure}

%In the upper panels, the blue solid curves are the density of the JR prior and the red dashed curves are the density of the reference prior (up to the normalizing constants). The densities of the JR prior for ${p_x}=2$ are graphed in the lower panels.

 The reference prior has many other advantages in emulation that were not noticed before. First, when the dimension of the inputs  increases, the prior mass moves from the smaller values of $\gamma_l$ to the large values of $\gamma_l$, for each $l=1,...,{p_x}$. This is an important property since, as any of $\hat \gamma_l \approx 0$, $\hat {\mathbf R}$ is near diagonal, a degenerate case that should be avoided. When ${p_x}$ increases, the chance that at least one  $\gamma_l$ is estimated to be small increases, if the prior mass does not change along with ${p_x}$ and, consequently, the chance that $\hat {\mathbf R}\approx \mathbf I_n$  also increases. The reference prior adapts to the increase of the dimension by concentrating more prior mass at larger $\gamma_l$, avoiding $\hat {\mathbf R}\approx \mathbf I_n$, when ${p_x}$ increases. 
 
 %These first two advantages can be  by the tail decay rate studied in \cite{Gu2018robustness}. 

Second, when a denser design is used in a fixed domain of the input space, the prior mass of the reference prior parameterized by $\gamma_l$ moves to the domain with smaller values. This is helpful for the inversion of the covariance matrix in practice, because as points fill with a fixed domain of the input space, the covariance matrix becomes singular if $\hat \gamma_l$ does not change. 

%Decreasing $\hat \gamma_l$ along with the increase of the number of the design points  the covariance matrix inversion problem. 

%\begin{figure}[t]
%\centering
%  \begin{tabular}{ccc}
%	\includegraphics[scale=.24]{plot_prior_1_dim_n_20_pow_exp.pdf}
%	\includegraphics[scale=.24]{plot_prior_1_dim_n_50_pow_exp.pdf}
%	\includegraphics[scale=.24]{plot_prior_1_dim_n_100_pow_exp.pdf}\\
%	\includegraphics[scale=.24]{plot_prior_2_dim_n_20_pow_exp.pdf}
%	\includegraphics[scale=.24]{plot_prior_2_dim_n_50_pow_exp.pdf}
%	\includegraphics[scale=.24]{plot_prior_2_dim_n_100_pow_exp.pdf}
%
%  \end{tabular}
%
%   \caption{ The reference prior $\pi^{R}(\bm \xi)$ for the power exponential correlation function with  roughness parameters $\alpha_l=1.9$, $1\leq l\leq p$. The dimensions of the inputs are $p=1$ in the first row and $p=2$ in the second row. From left to right, the number of design points are $n=20$, $n=50$ and $n=100$. The designs are generated from a maximin Latin Hypercube on $[0,1]$ for the first row and on $[0,1]\times [0,1]$ for the second row.  }
%\label{fig:plot_reference_prior}
%\end{figure}
%
%of the log inverse range parameter is graphed
%The prior density is plotted under the log inverse range parameter $\xi_l=\log(1/\gamma_l)$ for the purpose of visualization.
Here we provide a numerical justification of first two properties of the reference prior in Figure~\ref{fig:reference_prior_1dim_2dim}, where the reference prior density of the log inverse range parameter $\xi_l=\log(1/\gamma_l)$  for the GaSP with a power exponential correlation function is shown.  Comparing the figures with different sample sizes,   the mode of the prior moves to the region with larger values of log inverse range parameters (or equivalently the smaller range parameters), when the sample size increases. Comparing the figures with different dimensions, the prior mass moves to the region with smaller values of the log inverse range parameters (or equivalently the larger range parameters), when the inputs have higher dimensions. 

%Note that the 

%From a theoretical point of view, the posterior mode is biased, due to the penalty induced by the prior,  so inflating the prior variance when increasing the number of the observations is important for maintaining the efficiency of the posterior mode estimation.

% Increasing the inverse range parameters along with design points has also been found to be needed to have $L_2$ consistency in calibration of the computer model, while  MLE is not $L_2$ consistent in such scenarios  (\cite{tuo2015theoretical,tuo2015efficient}).
%with two real  constants $c_{0l}$ and $c_{1l}$

The third property of interest is that the reference prior is invariant to the location-scale transformation of the inputs, if the mean basis functions contain only the intercept and the linear terms of $\mathbf x$.  When we apply a location-scale transformation of each coordinate of the input $\tilde x_{l}=\frac{x_{l}- {c_{0l}}}{c_{1l}}$, for $l=1,...,{p_x}$, the new  reference prior  is $\tilde \pi^R(\gamma_1,...,\gamma_{p_x})=\pi^R(\gamma_1/c_{11},...,\gamma_p/c_{1{p_x}})$. This makes the prior scale naturally to the range of the inputs; as a consequence, we do not need to normalize the inputs.

%%%%%%%%%%%%%%%%%%%%%%%%%%%%%%%%%%%%%%%%%%%%%%%%%%
%%%%%%%%%%%%%%%%%%%%%%%%%%%%%%%%%%%%%%%%%%%%%%%%

 In addition, the reference prior has an appropriate tail decay rate at the limits when $\mathbf R = \mathbf I_n$ and $\mathbf R = \mathbf 1_n\mathbf 1^T_n$ (\cite{Gu2018robustness}).  When  $ \gamma_l \to 0$ for any $l=1,...,{p_x}$, the density of the reference prior decreases at an exponential rate approximately; when $ \gamma_l \to \infty$ for all $l=1,...,{p_x}$, the density of the  reference prior deceases at a polynomial rate. The first part of the tail rates induces an exponential penalty to the likelihood when the correlation matrix is near diagonal, prohibiting the undesired situation in emulation. The posterior with the reference prior has slow polynomial decay rates when $ \gamma_l$ is large for all $l=1,...,{p_x}$ (or equivalently $\mathbf R \approx \mathbf 1_n\mathbf 1^T_n$), allowing the marginal likelihood to come into play at this limit. The larger $\gamma_l$ is found to make prediction more precise (\cite{zhang2004inconsistent}), and thus a small polynomial penalty from the reference prior both reduces the singular estimation of the covariance matrix and maintains high accuracy in prediction.

 Despite various benefits in using the reference prior for emulation,  the computational challenges still persist with the use of the reference prior, even if the posterior mode estimation is used in lieu of the posterior sampling. The computational order of the reference prior is $O(p_xn^3)$, which is mainly from computing $\mathbf W_l$ in (\ref{equ:efi}), for $l=1,...,p_x$, and the inversion of the covariance matrix.  However,  the closed form derivatives of the reference prior are very computationally intensive, as it requires to compute $\partial^2 \mathbf R/ \partial \gamma_i \partial \gamma_j $, for $1\leq i,\,j\leq p_x$. The total computational orders of $p_x$ directional derivatives of the reference prior is $O(p^3_xn^3)$, because the computational order of each directional derivative is $O(p^2_xn^3)$ for the matrix multiplication. Because of these reasons, the author does not find any literature that provides the closed-form derivatives of the reference prior in this scenario, though some frequently used mode search algorithms, such as the low-storage quasi-Newton optimization method (\cite{nocedal1980updating}), typically rely on the information of the derivatives. Instead, one typically computes the numerical derivatives, which requires more evaluations of the likelihood, each with $O(n^3)$ in computing the inversion of the covariance matrix, and thus it is also very time consuming. In addition, the reference prior could also induce some extra local modes, making the optimization algorithm harder to converge to the global mode (see e.g. the change of the slope of the reference prior density in the upper middle panel in Figure \ref{fig:reference_prior_1dim_2dim} and another example is given in Figure 3.3 in \cite{Gu2016thesis}).

%The closed form derivatives of the reference prior are very computationally intensive, while some frequently used optimizing algorithms, such as the low-storage quasi-Newton optimization method (\cite{nocedal1980updating}), typically rely on the information of the derivatives.

%A further  complication of the model is the inclusion of the
%It is common that that

Some inputs of the computer model may have very small effects on the outputs of the computer model. These inputs are called inert inputs and are often omitted in emulation. When the inert inputs are omitted, a noise is needed in the GaSP emulator, as the emulator should no longer be an interpolator at the design points. The GaSP emulator can be extended to include a  noise or nugget, $\tilde f^M(\cdot)=f^M(\cdot)+\epsilon$, where $f^M(\cdot)$ still follows a GaSP model and $\epsilon \sim N(0, \sigma^2_0)$ is an independent Gaussian noise.   Define the nugget variance ratio parameter $\eta=\sigma^2_0/\sigma^2$. The reference prior $\pi^R(\bm \gamma, \eta)$ has been derived for the GaSP model with a noise  (\cite{ren2012objective,kazianka2012objective,Gu2016PPGaSP}). The advantages of using the reference prior with a nugget are similar to our previous discussion and are thus omitted here.

%The results of the robust estimation and advantages of using the reference prior hold for the case with a nugget and are thus omitted here. We next discuss how to identify the inert inputs in the computer model. 
%This parameter is also sometimes referred as the nugget parameter. 

\subsection{GaSP for computer model calibration}
\label{subsec:GP_calibration}
Some parameters in the computer model are unknown and unobservable in experiments. We denote the mathematical model output by $f^M(\mathbf x, \bm \theta)$, where $\mathbf x$ is a ${p_x}$-dimensional vector of  observable inputs in experiment and $\bm \theta$ is a ${p_{\theta}}$-dimensional vector of  unobservable parameters. The calibration problem is to estimate $\bm \theta$ by a set of field data $\mathbf y^F:=(y^F(\mathbf x_1),...,y^F(\mathbf x_n))^T$. In practice, a perfect mathematical model to the reality is rarely the case. It is common to address the model misspecification by a discrepancy function, such that the reality can be represented as  $y^R(\mathbf x)=f^M(\mathbf x, \bm \theta)+\delta(\mathbf x)$, where $y^R(\cdot)$ and $\delta(\cdot)$ denote the reality and discrepancy function, respectively. It leads to the following statistical model for calibration
 \begin{equation}
y^F(\mathbf x)=f^{M}(\mathbf x, \bm \theta)+\delta(\mathbf x)+\epsilon,
\label{equ:model_calibration}
\end{equation}
where $\epsilon \sim N(0, \, \sigma^2_0)$ is an independent zero-mean Gaussian noise.  For simplicity, we assume $f^M(\cdot, \cdot)$ is computationally cheap to evaluate for now.

 As we often know very little  about the discrepancy function,  the GaSP is suggested in \cite{kennedy2001bayesian} to model the discrepancy function, i.e. $\delta(\cdot) \sim \mbox{GaSP}(\mu(\cdot), \, \sigma^2 c(\cdot, \cdot) )$, where the mean and correlation functions are defined in (\ref{equ:mean}) and (\ref{equ:correlation}), respectively. It is usual to define $\eta=\sigma^2_0/\sigma^2$, the nugget-variance ratio parameter for the computational reason, as now $\sigma^2$ is a scale parameter which has a conjugate prior.
%Assume the prior for the calibration parameters $\bm \theta$ is independent to other parameters.

The parameters in (\ref{equ:model_calibration}) consist of the calibration parameters, mean parameters,   range parameters, a variance parameter and a nugget parameter in the covariance function, denoted as $(\bm \theta, \bm \theta_m, \bm \gamma, \sigma^2, \eta)$.    Consider the following prior for the calibration problem
%We generalize the prior for the parameters in a GaSP in (\ref{equ:refprior}) and the extra calibration parameters and noise parameter 
\begin{equation}
\pi(\bm \theta, \bm \theta_m, \sigma^2, \bm \gamma, \eta ) \propto \frac{\pi(\bm \gamma, \eta) \pi(\bm \theta) }{\sigma^2}.
\label{equ:prior_calibration}
\end{equation}
As the calibration parameters normally have scientific meanings, $\pi(\bm \theta)$ is typically chosen by expert knowledge and thus we do not give a specific form herein. To the author's knowledge, the posterior propriety has not been shown for the above prior in the calibration problem, except for the case that $f^M(\mathbf x, \bm \theta)$ is linear with regard to $\bm \theta$. We have the following theorem to guarantee the posterior propriety when the prior in (\ref{equ:prior_calibration}) is used. 	The proof	for Theorem \ref{thm:post_propriety} generalizes the proof  in \cite{berger1998bayes}, which is a special case with a mean parameter, a variance parameter and two independent observations.  

%Example 1

%[state some about the posterior propriety. not studied before].  We have the following theorem to guarantee the posterior propriety when the prior has the form in (\ref{equ:prior_calibration}).

\begin{theorem}
Assume the prior  follows (\ref{equ:prior_calibration}) for the calibration model in (\ref{equ:model_calibration}) with $\pi(\bm \gamma, \eta)$ and $\pi(\bm \theta)$ being proper priors.  Let $\mathbf H_y=(\mathbf H, \mathbf y^F)$ be an $n\times (q+1)$ matrix. If $\mathbf H_y$ has full rank and $n\geq q+1$,  the posterior  is proper. 

\label{thm:post_propriety}
\end{theorem}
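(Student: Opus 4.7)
The plan is to marginalize out $\bm\theta_m$ (flat prior) and $\sigma^2$ (prior $1/\sigma^2$) analytically using the conjugate normal--linear structure, and then bound the remaining integral against the proper priors $\pi(\bm\theta)$ and $\pi(\bm\gamma,\eta)$. Conditionally on all parameters, $\mathbf y^F \sim \mathrm{MN}(\mathbf f^M_{\bm\theta} + \mathbf H\bm\theta_m,\, \sigma^2 \tilde{\mathbf R})$ with $\tilde{\mathbf R} := \mathbf R + \eta\mathbf I_n$ and $\mathbf f^M_{\bm\theta} := (f^M(\mathbf x_1, \bm\theta), \ldots, f^M(\mathbf x_n, \bm\theta))^T$. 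The full rank of $\mathbf H_y = (\mathbf H, \mathbf y^F)$ implies in particular that $\mathbf H$ has full column rank $q$, which is needed to integrate $\bm\theta_m$ against Lebesgue measure; the condition $n \geq q+1$ is exactly what is needed for the subsequent inverse-gamma integration in $\sigma^2$ to converge.

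Carrying out these standard computations, completing the square in $\bm\theta_m$ around the generalized-least-squares estimate and integrating produces $(2\pi\sigma^2)^{q/2}|\mathbf H^T\tilde{\mathbf R}^{-1}\mathbf H|^{-1/2}$ together with the profiled exponent $-S^2_{\bm\theta}/(2\sigma^2)$, where
$$S^2_{\bm\theta} \;=\; (\mathbf y^F - \mathbf f^M_{\bm\theta})^T\big[\tilde{\mathbf R}^{-1} - \tilde{\mathbf R}^{-1}\mathbf H(\mathbf H^T\tilde{\mathbf R}^{-1}\mathbf H)^{-1}\mathbf H^T\tilde{\mathbf R}^{-1}\big](\mathbf y^F - \mathbf f^M_{\bm\theta}).$$
Integrating $\sigma^2 \in (0, \infty)$ against $1/\sigma^2$ is an inverse-gamma integral, convergent whenever $n-q \geq 1$ and $S^2_{\bm\theta} > 0$, and produces $\Gamma((n-q)/2)\, 2^{(n-q)/2}(S^2_{\bm\theta})^{-(n-q)/2}$. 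Posterior propriety thus reduces to showing
$$\int\!\!\int \frac{\pi(\bm\theta)\,\pi(\bm\gamma,\eta)}{|\tilde{\mathbf R}|^{1/2}\,|\mathbf H^T\tilde{\mathbf R}^{-1}\mathbf H|^{1/2}\,(S^2_{\bm\theta})^{(n-q)/2}}\,d\bm\theta\, d\bm\gamma\, d\eta \;<\; \infty.$$

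The key analytic step, and the main obstacle, is to bound this integrand uniformly above, or at least integrably, in $(\bm\theta, \bm\gamma, \eta)$. The natural tool is the Schur-complement / partitioned-determinant identity
$$|\mathbf H^T\tilde{\mathbf R}^{-1}\mathbf H|\cdot S^2_{\bm\theta} \;=\; |\mathbf M^T\tilde{\mathbf R}^{-1}\mathbf M|, \qquad \mathbf M := (\mathbf H,\; \mathbf y^F - \mathbf f^M_{\bm\theta}),$$
which rewrites the denominator as $|\tilde{\mathbf R}|^{1/2}\,|\mathbf H^T\tilde{\mathbf R}^{-1}\mathbf H|^{(q+1-n)/2}\,|\mathbf M^T\tilde{\mathbf R}^{-1}\mathbf M|^{(n-q)/2}$ and directly brings in a matrix of the same shape as $\mathbf H_y$. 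Combined with the spectral sandwich $\tilde{\mathbf R}\succeq \eta\mathbf I_n$ and $\tilde{\mathbf R}\preceq (\lambda_{\max}(\mathbf R)+\eta)\mathbf I_n$, the full-rank hypothesis on $\mathbf H_y$ anchors $|\mathbf M^T\tilde{\mathbf R}^{-1}\mathbf M|$ away from zero on a set of positive $\pi(\bm\theta)$-measure, so that the singular locus $\{\bm\theta: \mathbf y^F - \mathbf f^M_{\bm\theta} \in \mathrm{span}(\mathbf H)\}$ --- where $S^2_{\bm\theta}$ vanishes and the integrand threatens to blow up --- is absorbed using the properness of $\pi(\bm\theta)$. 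This is precisely where the argument generalizes the two-observation, one-mean-parameter setting treated in Berger et al.\ (1998) to the GaSP-calibration setting with arbitrary sample size, multidimensional calibration parameters, and additional range/nugget parameters carrying their own proper prior $\pi(\bm\gamma,\eta)$.
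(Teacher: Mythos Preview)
Your approach is genuinely different from the paper's. You integrate the improper-prior parameters $(\bm\theta_m,\sigma^2)$ first and then try to bound the remaining integrand against the proper priors $\pi(\bm\theta)\,\pi(\bm\gamma,\eta)$. The paper does essentially the opposite: it first discards all but $q+1$ observations, then integrates out the proper-prior block $(\bm\theta,\bm\gamma,\eta)$ to obtain a proper marginal density $p(\mathbf y^F_{1:(q+1)}\mid\bm\theta_m,\sigma^2)$, and finally disposes of $(\bm\theta_m,\sigma^2)$ by a single Jacobian change of variables $(\bm\theta_m,\sigma^2)\mapsto\bigl((y^F(\mathbf x_i)-\mathbf h(\mathbf x_i)\bm\theta_m)/\sigma\bigr)_{i\le q+1}$. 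In that argument the full-rank hypothesis on $\mathbf H_y$ enters only once, as the nonvanishing of the constant Jacobian determinant $J_0=|\mathbf H_{y0}|$; no uniform lower bound on $S^2_{\bm\theta}$ is ever invoked.

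Your last paragraph contains a real gap. The full-rank hypothesis is on $\mathbf H_y=(\mathbf H,\mathbf y^F)$, not on $\mathbf M=(\mathbf H,\mathbf y^F-\mathbf f^M_{\bm\theta})$, and nothing in the theorem's assumptions prevents $\mathbf y^F-\mathbf f^M_{\bm\theta}$ from falling into $\mathrm{span}(\mathbf H)$ for some $\bm\theta$ in the support of $\pi(\bm\theta)$. Your spectral sandwich only transfers the problem from $|\mathbf M^T\tilde{\mathbf R}^{-1}\mathbf M|$ to $|\mathbf M^T\mathbf M|$, which vanishes at exactly the same $\bm\theta$. Properness of $\pi(\bm\theta)$ does \emph{not} absorb such a singularity: take $p_\theta=1$ and $f^M$ linear in $\theta$, so that $S^2_{\bm\theta}$ is a nonnegative quadratic in $\theta$; if it vanishes at some $\theta_0$ (which is generic when $n=q+1$), then $S^2_{\bm\theta}\asymp(\theta-\theta_0)^2$ and $(S^2_{\bm\theta})^{-(n-q)/2}\asymp|\theta-\theta_0|^{-(n-q)}$, which fails to be locally integrable against any $\pi(\theta)$ bounded below near $\theta_0$, for every $n\ge q+1$. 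So the claimed ``absorption'' step does not go through as written; the paper's device of integrating the proper-prior parameters first and replacing the $(\bm\theta_m,\sigma^2)$ integral by a change of variables is precisely what circumvents ever producing the unbounded factor $(S^2_{\bm\theta})^{-(n-q)/2}$.
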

\begin{proof}
Since $\mathbf H_y$ has full rank and $n\geq q+1$, one can select $q+1$ linearly independent rows of $\mathbf H_y$, denoted as  $\mathbf H_{y0}$, such that $\mathbf H_{y0}$ is invertible. W.l.o.g., we assume the first $q+1$ rows of $\mathbf H_y$ are linearly independent. 

 We first marginalize out the last $n-q-1$ field data and the resulting density is denoted as $p(\mathbf y^F_{1:(q+1)}\mid \bm \theta_m, \sigma^2, \bm \theta, \bm \gamma, \eta)$.  As $\pi(\bm \theta)$ and $\pi(\bm \gamma, \eta)$ are both proper, we then marginalize out $(\bm \theta, \bm \gamma, \eta)$ and obtain the proper marginal density $p(\mathbf y^F_{1:(q+1)}\mid \bm \theta_m, \sigma^2)$. Since $(\bm \theta_m, \sigma^2)$ are the location-scale parameters for the marginal density, one has 
\begin{align*}
&\int ...\int p(\mathbf y^F_{1:(q+1)}\mid \bm \theta_m, \sigma^2) \pi(\bm \theta_m, \sigma^2) d \bm \theta_m d\sigma^2 \\
=&\int ...\int \frac{1}{ (\sigma^2)^{(q+1)/2+1} } p\left(\frac{y(\mathbf x_1) - \mathbf h(\mathbf x_1)\bm \theta_m}{\sigma},...,\frac{y(\mathbf x_{q+1}) - \mathbf h(\mathbf x_{q+1}) \bm \theta_m}{{\sigma}} \right)d \bm \theta_m d\sigma^2  \\
=&\int ... \int \frac{1}{ (\sigma^2)^{(q+1)/2+1} } | J^{-1} |  p\left( \tilde y_1,..., \tilde y_{q+1} \right)d \tilde y_1...d \tilde y_{q+1}
\end{align*}
where the fist equation follows from the definition of the location-scale family and the second equation follows from parameter transformation for $\tilde y_i=\frac{y(\mathbf x_i)- \mathbf h(\mathbf x_i)\bm \theta_m }{\sigma}$, for $i=1,...,{q+1}$, with the Jacobian determinant being
		\begin{align*}
                    J^{-1}
		&=\left|\begin{matrix}
		-\frac{h_1(\mathbf x_1)}{\sigma} & \cdots &
		-\frac{h_{q}(\mathbf x_1)}{\sigma}&-\frac{y^F(\mathbf
			x_1) -\mathbf h(\mathbf x_1) \bm \theta_m }{\sigma^3}\\
		\vdots&\ddots&\vdots&\vdots\\
		-\frac{h_1(\mathbf x_{q+1})}{\sigma} & \cdots &
		-\frac{h_{q}(\mathbf
			x_{q+1})}{\sigma}&-\frac{y^F(\mathbf x_{q+1})-\mathbf h(\mathbf
			x_{q+1}) \bm \theta_m }{\sigma^3}
		\end{matrix}\right|^{-1} \nonumber\\
		&={\sigma^{q+3}}\left|\begin{matrix}
		h_1(\mathbf x_1) & \cdots & h_{q}(\mathbf
		x_1)&y^F(\mathbf x_1)\\
		\vdots&\ddots&\vdots&\vdots\\
		h_1(\mathbf x_{{q+1}}) & \cdots & h_{q}(\mathbf
		x_{{q+1}})&y^F(\mathbf x_{{q+1}})		\end{matrix}
		\right|^{-1} ,\\
		&= {\sigma^{q+3}}  J_0^{-1},
		\label{equ:jacobian}
		\end{align*}
    where $J_0=|\mathbf H_{y0}|$. 
		Hence one has 
		\[ \int ...\int p(\mathbf y^F_{1:(q+1)}\mid \bm \theta_m, \sigma^2) \pi(\bm \theta_m, \sigma^2) d \bm \theta_m d\sigma^2 d \mathbf y^F_{1:(q+1)}= | J_0^{-1}| <\infty.  \]
		
		%If $n>q+1$, one can first marginalize out the extra $n-q-1$ field data and start with the random vector with $q+1$ dimensions.

\end{proof}

Note that the reference prior in (\ref{equ:refprior}) is proper for many widely used correlation functions, as long as the intercept is contained in the mean basis matrix, i.e. $\mathbf 1_n \in \mathcal C(\mathbf H)$, where $\mathcal C(\mathbf H)$ denotes  the column space of the mean basis matrix  (\cite{Gu2018robustness}).  Theorem~\ref{thm:post_propriety} states that using the reference prior is legitimate in the calibration problem when the mean basis contains an intercept.  Empirically, the reference prior changes very little with an added intercept in the column space of the mean basis matrix.

We have assumed the mathematical model is computationally cheap so far. When the computer model is expensive to run, one can combine the GaSP emulator in the calibration model through a full Bayesian approach.  In practice, however, since the field data typically contain larger noises and may not provide much information for the emulation purpose, a modular approach is often used,  meaning that the GaSP emulator only depends on the outputs of the computer model (\cite{liu2009modularization}). We refer to  \cite{bayarri2007framework} for an overview of combining the GaSP emulator in a calibration model. The modular approach is implemented in \cite{gu2018robustcalibrationpackage}, where the parameters in the GaSP emulator are estimated based on the outputs of the computer model (\cite{gu2018robustgasp}). In the calibration, we draw a sample from the posterior predictive distribution of the GaSP emulator when we need to evaluate the computer model.

\section{Jointly robust prior}
\label{sec:JR_prior}
We introduce a new class of priors for  calibration and emulation of  mathematical models in this section. In Section \ref{subsec:JR_Calibration} and Section \ref{subsec:emulation_JR_prior}, we show that the JR prior has all the nice features of the reference prior discussed in Section \ref{subsec:GP_emulator}. The benefits of the new prior in calibration and identifying the inert inputs in mathematical models will be discussed in  Section  \ref{subsec:JR_Calibration} and Section \ref{sec:variable_selection}, respectively.
%by selecting an appropriate form of the prior parameters

\subsection{Calibration}
\label{subsec:JR_Calibration}
%We introduce a new class of priors for  calibration of  mathematical models in this section.
%We still assume a location-scale prior for the mean parameters and variance parameter. 

 We first introduce the new prior in the calibration setting, where the model in given in (\ref{equ:model_calibration}) with the discrepancy function $\delta(\cdot)$ modeled as a GaSP.  Define the inverse range parameter $\beta_l=1/\gamma_l$, for $l=1,...,{p_x}$, and the  nugget-variance parameter $\eta=\sigma^2_0/\sigma^2$ in the covariance function. The overall prior follows
\begin{equation}
\pi(\bm \theta, \bm \theta_m, \sigma^2, \bm \beta, \eta ) \propto \frac{\pi^{JR}(\bm \beta, \eta) \pi(\bm \theta) }{\sigma^2}.
%\pi(\bm \theta_m, \bm \beta, \sigma^2, \bm \gamma, \eta )\propto \frac{\pi^{JR}(\bm \gamma, \eta) }{\sigma^2}.
\label{equ:JR_all}
\end{equation}
%Note the GaSP emulator without a noise is a special case with $\eta=0$ and no calibration parameter, so the properties can be easily extended to such scenario.  
The key part is the prior for the range parameters and nugget-variance parameter, where we call it the {jointly robust (JR) prior} and the form is given as follows
%We consider the GaSP model with a noise 
%[maybe start the case with $\eta$]
%The new prior, called the {\em jointly robust (JR) prior} and denoted as $\pi^{JR}(\bm \beta)$, has the following simple form:
\begin{equation}
\pi^{JR}(\beta_1,...,\beta_{p_x}, \eta)=C\left(\sum^{{p_x}}_{l=1} C_l \beta_l+ \eta \right)^{a} \exp\left\{-b\left(\sum^{{p_x}}_{l=1} C_l\beta_l + \eta\right)\right\}_,
\label{equ:JRprior}
\end{equation}
%This prior has the desired exponential tail rates and polynomial tail rates at its two limits, and will be seen to match other desirable features of the reference prior, such as scaling for dimension and the number of observations.
%The form $\sum^{p}_{l=1} C_l \beta_l+ \eta$ is defined to match the tail rate of the reference prior at $\mathbf R \approx \mathbf 1_n\mathbf 1^T_n$ shown in 
%  for the convergence rate
%Note that when $p=1$ and no nugget, the jointly robust prior follows a gamma distribution, coincides with the  prior for the GaSP model used in \cite{van2009adaptive} that is adaptive to the smoothness of the latent function.
where $C$ is a normalizing constant; $a>-({p_x}+1)$, $b>0$ and $C_l>0$ are prior parameters. The name ``jointly robust"  is used to reflect the fact that the prior can't be written as a product of the marginal priors of the range parameter for each coordinate of the input, and it is robust in marginal posterior mode estimation (see Section~\ref{subsec:emulation_JR_prior} for details).  The form $\sum^{{p_x}}_{l=1} C_l \beta_l+ \eta$ is inspired by the tail rate of the reference prior at $\mathbf R = \mathbf 1_n\mathbf 1^T_n$ shown in Lemma 4.1 in \cite{Gu2018robustness}.  Besides, the JR prior is a proper prior. The posterior propriety of using (\ref{equ:JRprior}) is thus guaranteed when $\pi(\bm \theta)$ is proper, shown in Theorem~\ref{thm:post_propriety}.

 We first show some properties of this prior and then discuss the default choice of the prior parameters. First of all, the normalizing constant of the prior is given as follows.
 %computed in Lemma~\ref{lemma:norm_const}.
\begin{lemma} (Normalizing constant.)
 The jointly robust prior is proper and has the normalizing constant $C=\frac{{p_x}! b^{a+{p_x}+1}\prod^{{p_x}}_{l=1}C_l }{\Gamma(a+{p_x}+1) }$, where $\Gamma(\cdot)$ is the gamma function.
 \label{lemma:norm_const}
\end{lemma}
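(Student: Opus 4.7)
The plan is to reduce the multivariate integral
\[
\int_0^\infty\!\!\cdots\!\int_0^\infty \left(\sum_{l=1}^{p_x} C_l\beta_l+\eta\right)^{a}\exp\!\left\{-b\left(\sum_{l=1}^{p_x} C_l\beta_l+\eta\right)\right\}d\beta_1\cdots d\beta_{p_x}d\eta
\]
to a one-dimensional gamma integral, and then invert to read off $C$. The key observation is that the integrand depends on its $p_x+1$ arguments only through their linear combination $\sum_l C_l\beta_l+\eta$, so the problem is essentially computing the density of the sum of independent nonnegative quantities.

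First I would perform the change of variables $u_l=C_l\beta_l$ for $l=1,\dots,p_x$ and $u_{p_x+1}=\eta$, all ranging over $(0,\infty)$. The Jacobian determinant is $\prod_{l=1}^{p_x}C_l^{-1}$, which pulls out of the integral as a multiplicative constant. After this step the integrand is the clean expression $(s)^{a}e^{-bs}$ where $s:=u_1+\cdots+u_{p_x+1}$.

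Next I would substitute $s$ in place of $u_{p_x+1}$, keeping $u_1,\dots,u_{p_x}$, with unit Jacobian. The region of integration becomes $\{u_1,\dots,u_{p_x}\ge 0,\;u_1+\cdots+u_{p_x}\le s\}$ with $s\in(0,\infty)$. Since the integrand no longer depends on $u_1,\dots,u_{p_x}$, the inner integral over the $p_x$-simplex contributes its Lebesgue volume $s^{p_x}/p_x!$ (a standard fact, also derivable by induction on $p_x$ or from the Dirichlet integral). The remaining outer integral is
\[
\frac{1}{p_x!}\int_0^\infty s^{a+p_x}e^{-bs}\,ds=\frac{\Gamma(a+p_x+1)}{p_x!\,b^{\,a+p_x+1}},
\]
where the gamma integral converges precisely because the hypothesis $a>-(p_x+1)$ makes $a+p_x+1>0$; this is exactly where propriety is guaranteed. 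Combining with the Jacobian factor $\prod_l C_l^{-1}$ and inverting yields $C=\dfrac{p_x!\,b^{a+p_x+1}\prod_{l=1}^{p_x}C_l}{\Gamma(a+p_x+1)}$, as claimed.

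I do not anticipate a real obstacle: the only point requiring any care is justifying the simplex-volume formula and checking the convergence condition at $s\to\infty$ (and at $s\to 0$ when $a<0$); both are standard. If one preferred to avoid the geometric argument altogether, an alternative would be to apply the Dirichlet integral identity $\int_{\Delta} \prod_l u_l^{\alpha_l-1}(1-\sum u_l)^{\alpha_{p_x+1}-1}du=\prod_l\Gamma(\alpha_l)/\Gamma(\sum_l\alpha_l)$ with all $\alpha_l=1$ combined with a rescaling $u_l\to u_l/s$, which would deliver the same $s^{p_x}/p_x!$ factor.
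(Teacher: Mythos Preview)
Your proposal is correct and follows essentially the same route as the paper: rescale $\tilde\beta_l=C_l\beta_l$ to pull out the Jacobian $\prod_l C_l^{-1}$, replace $\eta$ by the sum variable, integrate the remaining coordinates over the $p_x$-simplex to obtain the factor $s^{p_x}/p_x!$, and finish with the gamma integral $\int_0^\infty s^{a+p_x}e^{-bs}\,ds=\Gamma(a+p_x+1)/b^{a+p_x+1}$. If anything, you are slightly more careful than the paper in noting explicitly that the convergence of this last integral is exactly where the hypothesis $a>-(p_x+1)$ is used.
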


\begin{proof}[Proof of Lemma~\ref{lemma:norm_const}]
\begin{align*}
\frac{1}{C}=&\int...\int(\sum^{{p_x}}_{l=1}C_l\beta_l+ \eta)^aexp(-b(\sum^{{p_x}}_{l=1}C_l\beta_l +\eta) ) d\eta d\beta_1...d\beta_{p_x} \\
 =& \int...\int\frac{(\sum^{{p_x}}_{l=1}\tilde\beta_l +\eta )^aexp(-b(\sum^{{p_x}}_{l=1}\tilde\beta_l +\eta))}{\prod^{p_x}_{l=1}C_l}d \eta {d\tilde \beta_1...d\tilde \beta_{p_x}},  \quad \quad {\rm let} \, \tilde \beta_l=C_l\beta_l, \\
=&\int \frac{z^a \exp(-bz)}{\prod^{p_x}_{l=1}C_l}\int...\int_{\tilde \beta_1+...+\tilde\beta_{p_x}<z} {d\tilde \beta_1...d\tilde \beta_{p_x}} dz,  \quad \quad {\rm let} \, z=\sum^{{p_x}}_{l=1}\tilde \beta_l+\eta,  \\
=&\int \frac{z^a \exp(-bz)z^{p_x}}{\prod^{p_x}_{l=1}C_l {p_x}!} dz, \\
%=& \int \frac{z^{a+{p_x}-1} \exp(-bz)}{\prod^p_{l=1}C_l (p-1)!}dz,\\
=&\frac{\Gamma(a+{p_x}+1)}{ {p_x}!b^{a+{p_x}+1} \prod^{p_x}_{l=1}C_l}.
\end{align*}
\end{proof}
%Lemma~\ref{lemma:norm_const} implies 
The marginal prior mean and variance are given in the following lemma. 
\begin{lemma} (Prior mean and variance.)
For $i=1,...,{p_x}$, the prior mean and prior variance are given as follows.
\item[(i.)] ${\E}_{\pi^{JR}}[\beta_i]=\frac{a+{p_x}+1}{ ({p_x}+1) C_ib}$ and ${\E}_{\pi^{JR}}[\eta]=\frac{a+{p_x}+1}{ ({p_x}+1) b}$. 
\item[(ii.)] ${\Var}_{\pi^{JR}}[\beta_i]=\frac{(a+{p_x}+1)\{ ({p_x}+1)^2+{p_x}+a{p_x}+1\}}{({p_x}+1)^2({p_x}+2)C^2_ib^2}$ and  ${\Var}_{\pi^{JR}}[\eta]=\frac{(a+{p_x}+1)\{ ({p_x}+1)^2+{p_x}+a{p_x}+1\}}{({p_x}+1)^2({p_x}+2) b^2}$.
%\item[(ii.)] ${\E}_{\pi^{JR}}[\sum^{p}_{l=1}C_l\beta_l]=\frac{a+p}{b}$.
\label{lemma:prior_mean}
\end{lemma}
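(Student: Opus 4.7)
The plan is to exploit the symmetry introduced by the reparameterization already used in the proof of Lemma~\ref{lemma:norm_const}. Setting $\tilde\beta_l = C_l\beta_l$ for $l=1,\ldots,p_x$, the joint density becomes proportional to $z^a \exp(-bz)$ where $z = \sum_{l=1}^{p_x}\tilde\beta_l + \eta$, so the $p_x+1$ nonnegative coordinates $(\tilde\beta_1,\ldots,\tilde\beta_{p_x},\eta)$ enter the prior in a fully exchangeable way.

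First I would identify the marginal law of $z$ by repeating the simplex calculation in Lemma~\ref{lemma:norm_const} but without the factor $z^k$ absorbed: integrating out the $p_x$ free coordinates on the slice $\{\tilde\beta_l,\eta\geq 0,\sum\tilde\beta_l+\eta=z\}$ contributes the simplex volume $z^{p_x}/p_x!$, and combining with the normalizing constant $C$ shows that $z \sim \mathrm{Gamma}(a+p_x+1, b)$ in the shape--rate parameterization. This immediately gives $\E[z^k] = \Gamma(a+p_x+1+k)/[b^k\Gamma(a+p_x+1)]$, in particular $\E[z]=(a+p_x+1)/b$ and $\E[z^2]=(a+p_x+1)(a+p_x+2)/b^2$.

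Next I would condition on $z$: since the prior density depends on $(\tilde\beta_1,\ldots,\tilde\beta_{p_x},\eta)$ only through $z$, the conditional distribution on the simplex is uniform, which is exactly $z$ times a $\mathrm{Dirichlet}(1,\ldots,1)$ random vector of length $p_x+1$. The standard Dirichlet moments then give $\E[\eta\mid z] = z/(p_x+1)$ and $\E[\eta^2\mid z] = 2z^2/[(p_x+1)(p_x+2)]$. Taking expectation in $z$ yields $\E[\eta] = (a+p_x+1)/[(p_x+1)b]$ directly, and for the variance one forms $\E[\eta^2]-\E[\eta]^2$ and factors out $(a+p_x+1)/[(p_x+1)^2(p_x+2)b^2]$; the remaining bracket simplifies to $2(p_x+1)(a+p_x+2)-(p_x+2)(a+p_x+1) = ap_x + (p_x+1)(p_x+2) = (p_x+1)^2+p_x+ap_x+1$, matching the stated formula.

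Finally, the results for $\beta_i$ follow by unwinding the rescaling. Since $\beta_i = \tilde\beta_i/C_i$ and $(\tilde\beta_1,\ldots,\tilde\beta_{p_x},\eta)$ are exchangeable under the prior, we have $\E[\tilde\beta_i] = \E[\eta]$ and $\Var(\tilde\beta_i)=\Var(\eta)$, so dividing by $C_i$ and $C_i^2$ respectively produces the two expressions in (i) and (ii). The only conceptually nontrivial step is recognizing the Gamma--Dirichlet structure of the prior; once that is in place the computation reduces to one-line moment identities, and the main bookkeeping obstacle is the small algebraic simplification that collapses the bracket in $\Var(\eta)$ into the compact form stated in the lemma.
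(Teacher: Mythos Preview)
Your argument is correct and arrives at exactly the same intermediate quantities as the paper, but the packaging is different. The paper computes $\E_{\pi^{JR}}[\beta_i]$ by the same substitution $\tilde\beta_l=C_l\beta_l$, $z=\sum_l\tilde\beta_l+\eta$, then evaluates the inner simplex integral $\int_0^z (z-\tilde\beta_i)^{p_x-1}\tilde\beta_i/(p_x-1)!\,d\tilde\beta_i$ directly and integrates against $z^a e^{-bz}$; for the variance it says only ``using the similar method'' to obtain $\E_{\pi^{JR}}[\beta_i^2]=2(a+p_x+2)(a+p_x+1)/[(p_x+1)(p_x+2)C_i^2b^2]$ and then subtracts the squared mean. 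Your route names the underlying structure---$z\sim\mathrm{Gamma}(a+p_x+1,b)$ and the conditional uniform-on-simplex (Dirichlet$(1,\dots,1)$) law---and reads off the moments from standard identities, which makes the exchangeability of $(\tilde\beta_1,\dots,\tilde\beta_{p_x},\eta)$ explicit and lets you treat $\eta$ and $\beta_i$ in one stroke. The paper's bare-hands integration avoids invoking named distributions but is otherwise the same calculation; your version is shorter and arguably clearer, and your algebraic check that $2(p_x+1)(a+p_x+2)-(p_x+2)(a+p_x+1)=(p_x+1)^2+p_x+ap_x+1$ is correct.
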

%The proof of the prior variance is similar to the proof for the prior mean so it is omitted here.
\begin{proof}[Proof of Lemma~\ref{lemma:prior_mean}]  We only show the prior mean for $\beta_i$, as the proof of the prior mean for $\eta$ is similar. For any $1\leq i\leq {p_x}$
\begin{align*}
\E[\beta_i]=&\int...\int \beta_i (\sum^{{p_x}}_{l=1}C_l\beta_l+ \eta)^a \exp\left(-b(\sum^{{p_x}}_{l=1}C_l\beta_l +\eta) \right) d\eta d\beta_1...d\beta_{p_x} \\
 =& \int \frac{c z^a \exp(-bz)}{C_i\prod^{p_x}_{l=1}C_l}  \int...\int_{\tilde \beta_1+...+\tilde\beta_{p_x}<z} \tilde \beta_i{d\tilde \beta_1...d\tilde \beta_{p_x}} dz,  \quad  {\rm let} \, \tilde \beta_l=C_l\beta_l, \, z= \sum^{p_x}_{l=1}\tilde \beta_l+\eta\\
 =& \int \frac{c z^a \exp(-bz)}{C_i\prod^{p_x}_{l=1}C_l}  \int^{z}_{0}\frac{(z-\tilde \beta_i)^{{p_x}-1}\tilde \beta_i}{({p_x}-1)!}d\tilde \beta_i dz \\
 =& \frac{a+{p_x}+1}{ ({p_x}+1) C_ib}.
%\E[\beta_j]=&\int...\int\beta_j c(\sum^{p}_{l=1}C_l\beta_l )^aexp(-b(\sum^{p}_{l=1}C_l\beta_l))d\beta_1...d\beta_p \\
% =& \int...\int \frac{\tilde \beta_j c(\sum^{p}_{l=1}\tilde\beta_l)^a exp(-b(\sum^{p}_{l=1}\tilde\beta_l))}{C_j\prod^p_{l=1}C_l}{d\tilde \beta_1...d\tilde \beta_p},  \quad \quad {\rm let} \, \tilde \beta_l=C_l\beta_l, \\
%=&\int  \frac{ c z^a \exp(-bz)}{C_j\prod^p_{l=1}C_l}\int...\int_{\tilde \beta_2+...\tilde\beta_p<z}\tilde \beta_j{d\tilde \beta_2...d\tilde \beta_p}dz,  \quad \quad {\rm let} \, z=\sum^{p}_{l=1}\tilde \beta_l,  \\
%=&\int   \frac{c z^a \exp(-bz)}{C_j \prod^p_{l=1}C_l} \int^{z}_{0}\frac{(z-\tilde \beta_j)^{p-2}\tilde \beta_j}{(p-2)!}d\tilde \beta_j dz\\
%=&\int \frac{c \exp(-bz)z^{a+p}}{C_j\prod^p_{l=1}C_l p!}dz, \\
%=&  \frac{a+p}{pbC_j}. 
%=& \int \frac{z^{a+p-1} \exp(-bz)}{\prod^p_{l=1}C_l (p-1)!}dz,\\
%=&\frac{\Gamma(a+p)}{ (p-1)!b^{a+p} \prod^p_{l=1}C_l}.
\end{align*}
Using the similar method for the prior mean, for any $1\leq i\leq {p_x}$, we have 
\[\E_{\pi^{JR}}[\beta^2_i]=\frac{2(a+{p_x}+2)(a+{p_x}+1)}{({p_x}+1)({p_x}+2)C^2_i b^2}.\]
  Part (ii) follows from $\Var_{\pi^{JR}}[\beta_i]=\E_{\pi^{JR}}[\beta^2_i]-(\E_{\pi^{JR}}[\beta_i])^2 $ for $i=1,...,p_x$.
\end{proof}

%\begin{lemma} (Prior variance.)
%The  prior variance is as follows.
%\item[(i.)] ${\Var}_{\pi^{JR}}[\beta_l]=\frac{(a+p)(p^2+p+ap-a)}{p^2(p+1)C^2_lb^2}$, for $l=1,...,p$;
%\item[(ii.)] ${\Var}_{\pi^{JR}}[\sum^{p}_{l=1}C_l\beta_l]=\frac{a+p}{b^2}$.
%\label{lemma:prior_var}
%\end{lemma}

%From Lemma 1, we know that it has similar behavior as reference prior at the tail and now because of (3), posterior mode can be used along with noisy inputs. The posterior mean of each $\beta_l$ are given in the following lemma.

%If constant $a$ and $b$ are not related to $p$, marginal posterior mean of $\beta$ is decreasing along with $p$, which satisfy the feature (iv) of reference prior. Variance can also be computed but looks in a more complicated form. To make it a similar rate as in (v), $b \propto 1/{n}^{1/2}$. It might be a practical choice $b=b_1(a+p)/{n}^{1/2}$ with $b_1>0$. But I leave it as constant here.

%\subsection{On choice of the prior parameters}
%\label{subsec:pseudopar}
%which matches the third advantage of the reference prior discussed in Section \ref{subsec:GP_emulator

The prior parameters of the jointly robust prior in Equation (\ref{equ:JRprior}) consist of the overall scale parameter $a$, the rate parameter $b$ and input scale parameters $C_l$, $l=1,...,{p_x}$. First, we let $C_l= n^{-1/{p_x}}|x^{max}_{l}-x^{min}_{l}| $, where $x^{max}_{l}$ and $x^{min}_{l}$ are the maximum and minimum values of the input at the $l$th coordinate, which makes the reference prior invariant to the location-scale transformation of the input. The factor $ n^{-1/{p_x}}$ is the average distance between the inputs, as the average sample size for each coordinate of the input is $n^{1/{p_x}}$ when we have $n$ inputs from a Lattice design at a $p$ dimensional input space. This choice allows the JR prior to match the behavior of the reference prior  to the change of dimensions and number of observations. Second, we let $b=1$ to have a large exponential penalty to avoid the estimation of $\mathbf R$ being near diagonal. 

% the average ``effective" sample size for every dimension is $n^{1/p}$. 

%The random design (e.g. from LHD) behaves approximately to the lattice in the long run. Furthermore,  under the location-scale transformation of each coordinate $\tilde x_{l}=\frac{x_{l}- {c_{0l}}}{c_{1l}}$, for all $l=1,...,p$, the new prior is $\tilde \pi^{JR}(\beta_1,...,\beta_p)=\pi^{JR}(\beta_1/c_{11}...,\beta_p/c_{1p})$ and thus the new prior is invariant to the location-scale transformation in each dimension. 

%at the parameter space $(\log(\beta), \log(\eta))$

The choice of $a$ is an open problem and may depend on specific scientific goals. In the calibration setting, when $a$ is close to $-1-{p_x}$, the prior density is almost flat when $\log(\beta)\to 0$ and $\log(\eta) \to 0$, resulting in the large estimated correlation in some scenarios, which makes the calibrated computer model without the discrepancy function fit  the reality poorly (\cite{gu2017improved}). On the contrary, when $a$ is large, the method is biased to small correlation and make the prediction less accurate. In the RobustCalibration package (\cite{gu2018robustcalibrationpackage}), $a=1/2-{p_x}$ is the default setting for the calibration problem, which balances between prediction and calibration. $a=1/2-{p_x}$, $b=1$ and  $C_l= n^{-1/{p_x}}|x^{max}_{l}-x^{min}_{l}| $, $l=1,...,p_x$, will be used for all numerical comparisons in calibration.

% The selection of $a$ is an open problem. 

% and $b$ can depend on specified problems. As discussed previously, the density of reference prior maintains a slow polynomial decreasing rate when all $\beta_l \to 0$ (equivalently $\gamma_l \to \infty$), while the density decreases at a fast exponential rate if any  $\beta_l \to \infty$ (equivalently $\gamma_l \to 0$). The exact tail rate of the reference prior for various parametrization is given in Table 2 and Table 3 in \cite{Gu2018robustness}. We let $a=1/2-p$ and $b=1$ as default values of the jointly robust prior. Note in this choice, in the space of $\xi_l=\log(1/\gamma_l)$, the tail of the density decreases at a constant polynomial rate not related to $p$, when all $\xi_l \to -\infty$ (equivalently all $\beta\to 0$), which matches the rate of the reference prior shown in Table 2 in \cite{Gu2018robustness}.

    %in this choice, 

\begin{figure}[t]
\centering
  \begin{tabular}{ccc}
                \includegraphics[height=.3\textwidth,width=1\textwidth]{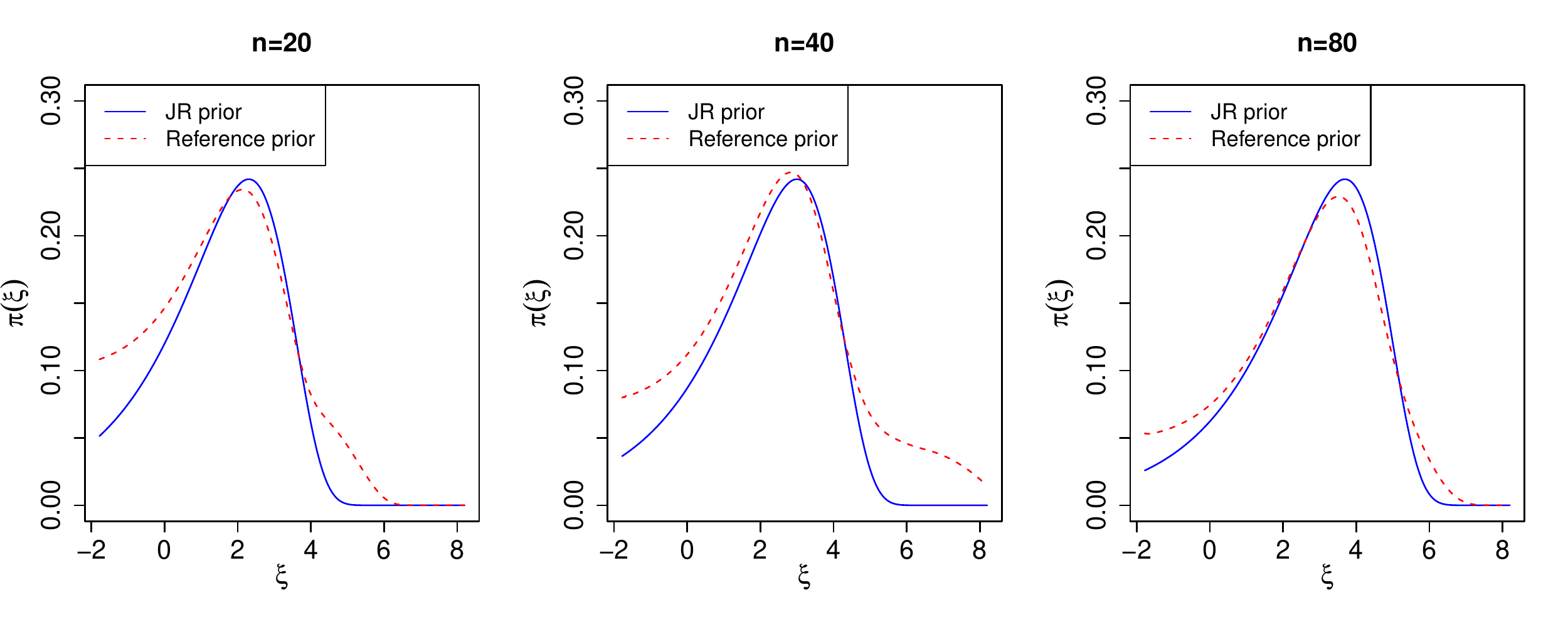} \vspace{-.15in}\\
                \includegraphics[height=.3\textwidth,width=.96\textwidth]{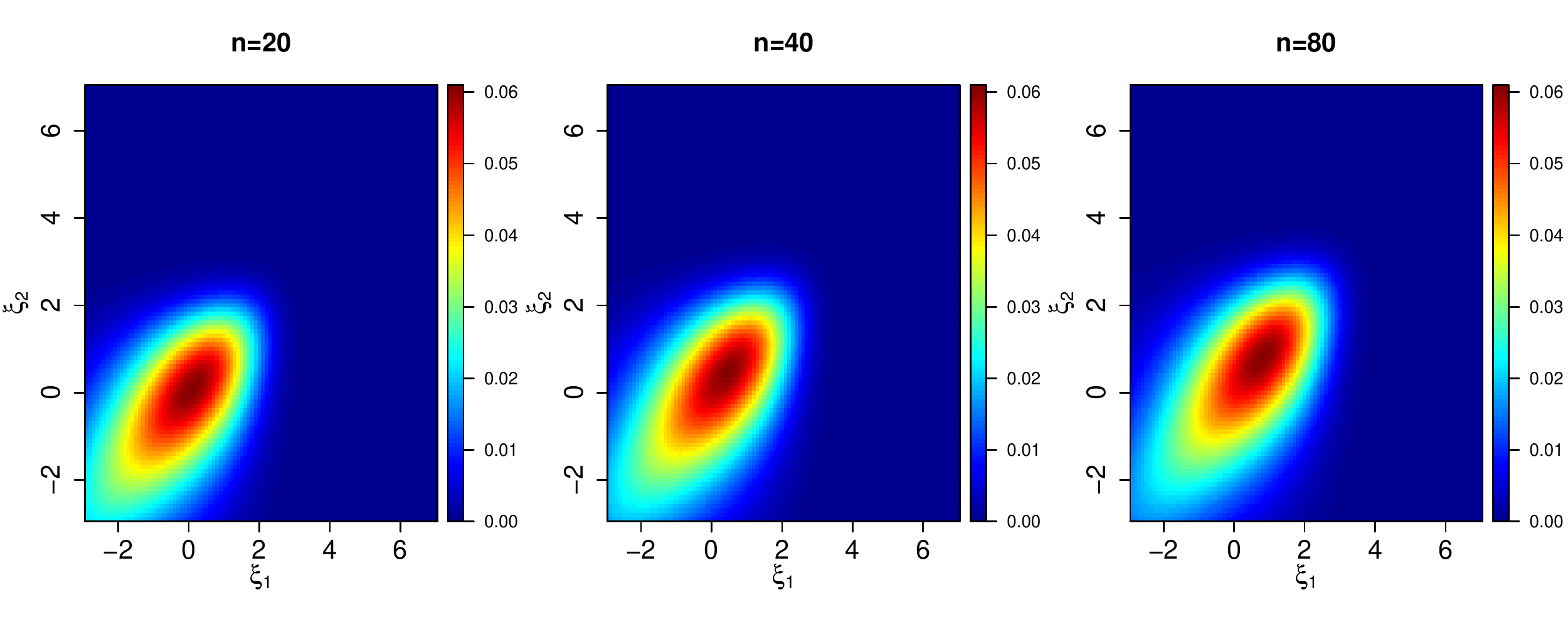} \vspace{-.15in}\\
  \end{tabular}
%  \vspace{-.2in}
   \caption{Priors for the log inverse range parameter of the Mat{\'e}rn correlation function where $\alpha_l=2.5$, $1\leq l\leq {p_x}$. From left to right, the number of design points are $n=20$, $n=40$ and $n=80$, respectively. The designs are generated from the maximin LHD on $[0,1]^{p_x}$. In the upper panels, the blue solid curves are the density of the JR prior and the red dashed curves are the density of the reference prior (up to the normalizing constants) with $p_x=1$. The densities of the JR prior for ${p_x}=2$ are graphed in the lower panels.  For all the panels, we assume $\mathbf H=\mathbf 1_n$.     \vspace{-.1in}}
\label{fig:prior_comparison_1dim}
\end{figure}
%with ${p_x}=1$ (upper panels) and ${p_x}=2$ (lower panels)

%(\cite{santner2003design})
%\vspace{-.2in}
%The tail of the reference prior were computed in \cite{Gu2018robustness}. E.g. when the power exponential correlation is used with no nugget, the . 
%with similar adaptation of prior mass when the number of observations and dimensions of inputs change

 In  Figure~\ref{fig:prior_comparison_1dim}, the densities of the JR prior and  reference prior with ${p_x}=1$ are graphed in the upper panels. The JR prior matches the reference prior reasonably well. When the number of observations increases, the mass of the JR prior moves to the domain with the larger values of $\bm \xi$, preventing overwhelmingly large  correlation. The densities of the JR prior are graphed in the lower panels with ${p_x}=2$. When the dimension of inputs increases, the mass of the JR prior moves to the domain with the smaller values of $\bm \xi$, preventing the covariance matrix from being estimated to be diagonal. Both features are important for avoiding the degenerate cases discussed in Section \ref{subsec:GP_emulator}.
   
Furthermore, with $a=1/2-{p_x}$, the tail  of the JR prior decreases slightly faster than the reference prior when $\xi\to -\infty$ shown in Figure~\ref{fig:prior_comparison_1dim}.  This is helpful for the identification of the calibration parameters, an example of which is given in Section~\ref{subsec:calibration_numerical}.

\subsection{Emulation}
\label{subsec:emulation_JR_prior}
In this subsection, we discuss parameter estimation with the JR prior in a GaSP emulator introduced in Section \ref{subsec:GP_emulator}. As computer models are often deterministic, the JR prior  has the following form 
%We assume the JR prior does not have a nugget parameter, as most computer model is deterministic, which is as follows
\begin{equation}
\pi^{JR}(\beta_1,...,\beta_{p_x})=C_0\left(\sum^{{p_x}}_{l=1} C_l \beta_l \right)^{a} \exp\left\{-b\left(\sum^{p_x}_{l=1} C_l\beta_l \right)\right\}_,
\label{equ:JRprior_no_nugget}
\end{equation}
where $C_0=\frac{({p_x}-1)! b^{a+{p_x}}\prod^{{p_x}}_{l=1}C_l }{\Gamma(a+{p_x}) }$. The JR prior in (\ref{equ:JRprior_no_nugget}) is a special case of (\ref{equ:JRprior}) with $\eta=0$, so the properties of the prior discussed in Section~\ref{subsec:JR_Calibration} can be easily extended to this scenario.

%The case with a nugget is easy to generalize so is omitted here. 

 One important feature of the reference  prior is that the marginal posterior mode estimation is robust under the $\bm \gamma$ and $\bm \xi$ parameterization. Here we show a similar result when the JR prior is used to replace the reference prior in the maximum marginal posterior posterior estimation in (\ref{equ:est_gamma}). 

\begin{thm}(Robust estimation of the JR prior.)
Assume the JR prior in (\ref{equ:JRprior_no_nugget}) with $b>0$ and $C_l>0$.
\begin{itemize}
\item Under the parameterization of the range parameter $\bm \gamma$ and the log inverse range parameter $\bm \xi$, the marginal posterior mode estimation with the JR prior is robust if $a>-{p_x}$,
\item Under the parameterization of the inverse range parameter $\bm \beta$, the marginal posterior mode estimation with the JR prior is robust if $a>0$.
\end{itemize}
\label{thm:robust_est_JR}
\end{thm}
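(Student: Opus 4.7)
The plan is to rule out that the marginal posterior mode lies on either of the two boundary regions at which $\hat{\mathbf R}$ would degenerate. By Definition 1, these are the ``Type A'' boundary where some $\beta_l\to\infty$ (equivalently $\gamma_l\to 0$ or $\xi_l\to\infty$), producing $\hat{\mathbf R}=\mathbf I_n$, and the ``Type B'' boundary where all $\beta_l\to 0$ simultaneously (equivalently all $\gamma_l\to\infty$ or all $\xi_l\to-\infty$), producing $\hat{\mathbf R}=\mathbf 1_n\mathbf 1_n^T$. It suffices to show that the marginal posterior density — likelihood times reparameterized prior — tends to zero on either boundary in the parameterization at hand, since continuity and strict positivity in the interior then force any mode into the interior.

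I would first invoke the facts established in \cite{Gu2018robustness} (Lemmas 3.3 and 4.1) that $L(\bm\gamma\mid\mathbf y)$ is uniformly bounded as any $\gamma_l\to 0$ and as all $\gamma_l\to\infty$; the latter relies on the cancellation of the blow-up of $|\mathbf R|^{-1/2}$ against the decay of $|\mathbf H^T\mathbf R^{-1}\mathbf H|^{-1/2}$ that holds when $\mathbf 1_n\in\mathcal C(\mathbf H)$. Given this uniform bound, the proof reduces to checking the boundary behavior of the reparameterized prior density in each of the three parameterizations.

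For the Type A boundary, the factor $\exp(-b\sum_l C_l\beta_l)$ in (\ref{equ:JRprior_no_nugget}) decays exponentially whenever any $\beta_l\to\infty$. The polynomial factor $(\sum_l C_l\beta_l)^a$ and the reparameterization Jacobians $\prod_l\beta_l$ (for $\bm\xi$) and $\prod_l\beta_l^2$ (for $\bm\gamma$) are only polynomial in $\bm\beta$, so the density decays to $0$ regardless of the sign of $a$. For the Type B boundary, the exponential factor converges to $1$ and only $(\sum_l C_l\beta_l)^a$ together with the Jacobian remains. Along the canonical ray $\beta_l=c_l t$ with $c_l>0$ and $t\downarrow 0$, the reparameterized prior scales, up to positive constants, as $t^{a}$ under $\bm\beta$, as $t^{a+p_x}$ under $\bm\xi$ (from the Jacobian $t^{p_x}\prod_l c_l$), and as $t^{a+2p_x}$ under $\bm\gamma$ (from the Jacobian $t^{2p_x}\prod_l c_l^2$). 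These vanish as $t\downarrow 0$ precisely when $a>0$, $a>-p_x$, and $a>-2p_x$ respectively. Since $a>-p_x$ implies $a>-2p_x$, the single condition $a>-p_x$ handles both $\bm\gamma$ and $\bm\xi$, whereas the $\bm\beta$ case requires $a>0$. A brief check confirms that non-uniform paths in which only a strict subset of $\beta_l$ collapse to $0$ are strictly easier: the Jacobian factors of the surviving coordinates stay bounded away from zero, while the collapsing coordinates still contribute vanishing Jacobian factors.

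The main technical obstacle is the boundedness of $L(\bm\gamma\mid\mathbf y)$ on the Type B boundary, where $\mathbf R$ becomes singular so that $|\mathbf R|^{-1/2}$ diverges while $|\mathbf H^T\mathbf R^{-1}\mathbf H|^{-1/2}$ vanishes. The precise cancellation of these two singular factors under the standing assumption $\mathbf 1_n\in\mathcal C(\mathbf H)$ is the crux of the argument, and I would cite the asymptotic analysis from \cite{Gu2018robustness} rather than re-derive it.
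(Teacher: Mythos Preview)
Your proposal is correct and follows essentially the same route as the paper's proof: invoke the boundedness of the marginal likelihood at both degenerate limits from \cite{Gu2018robustness} (the paper cites only Lemma 3.3 there), then verify that the reparameterized JR prior density vanishes at each limit. Where the paper's two-sentence proof simply asserts ``the density of the prior is zero at the two limits,'' you explicitly carry out the Jacobian bookkeeping ($\prod_l\beta_l$ for $\bm\xi$, $\prod_l\beta_l^2$ for $\bm\gamma$) that explains the different thresholds $a>-p_x$ versus $a>0$, which is a useful elaboration rather than a different argument.
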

\begin{proof}
By Lemma 3.3 in \cite{Gu2018robustness}, the marginal likelihood  $L(\bm \gamma| \mathbf y) \leq O(1)$ if $\gamma_l \to \infty$ for all $l$, or $\gamma_l \to 0$ for any $l$, $l=1,...,{p_x}$. The results follow from the fact that the density of the prior is zero at the two limits.  
\end{proof}

%\begin{remark}
%The conditions of in Theorem~\ref{thm:robust_est_JR} are not the necessary conditions. E.g., when the column space does not contain the intercept, i.e. $\mathbf 1_n \notin \mathcal C(\mathbf H)$, $a>-p-1/2$ and $a>-1/2$ are enough to guarantee the robust parameter estimation in case (i) and (ii), respectively.
%\end{remark}

Note that the marginal posterior mode with the reference prior under the parameterization of the inverse range parameter $\bm \beta$ is not robust. Surprisingly, the marginal posterior mode with the reference prior will always be at $\mathbf {\hat R}=\mathbf 1_n \mathbf 1^T_n$ under $\bm \beta$ parameterization, and should  clearly be avoided (\cite{Gu2018robustness}). By Theorem~\ref{thm:robust_est_JR}, the marginal posterior mode estimation with the JR prior is robust under the $\bm \beta$ parameterization if $a>0$. It has some added advantages for variable selection, as the posterior is positive if any $\beta_l=0$ given in the following remark.

%The following lemma gives the tail rates of the jointly robust prior.

\begin{remark} (Tail rates.)
Assume the JR prior in (\ref{equ:JRprior_no_nugget}) with $a>0$, $b>0$ and $C_l>0$. Here $\bm \beta_{E}$ denotes the vector of $\beta_l$ for all $l \in E$, $E \subset \{1,2,...,{p_x}\}$.

\item[(i.)] When $\bm \beta_{E} \to \bm \infty$,  the natural logarithm of the JR prior approximately decreases linearly with the rate  $-b\sum_{l \in E}\beta_l$.

\item[(ii.)] When  $\beta_l \to 0$ for all $l=1,...,{p_x}$, the natural logarithm of the JR prior  decreases at the rate of $a \log(\sum^{{p_x}}_{l=1} C_l \beta_l) $.

\item[(iii.)] When  $\bm \beta_{E}\to \mathbf 0$ and $\#E<{p_x}$, $\pi^{JR}(\beta_1,...,\beta_{p_x}) $ is finite and positive.
\label{remark:tailrates_JRprior}
\end{remark}

The first and second parts of the jointly robust prior match the exponential and polynomial tail decaying rates of the reference prior discussed in Section~\ref{subsec:GP_emulator}.  The third part is an improvement, which allows the identification of inert inputs by the marginal posterior mode with the jointly robust prior, discussed more in the next section. Note that the third part only holds for the parameter estimation under the parameterization by the inverse range parameter $\bm \beta$, while the JR prior loses such property under the parameterization by the other parameterizations, e.g. $\bm \gamma$ and $\bm \xi$. Thus we propose the following marginal posterior mode estimation with the reference prior
 \begin{equation}
 ({\hat \beta}_1, \ldots {\hat \beta}_{p_x})=  \mathop{\argmax }\limits_{ \beta_1,\ldots, \beta_{p_x}} \left\{ L( { \beta_1}, \ldots, {\beta_{p_x}} \mid \mathbf{y} )  \, \pi^{JR}({\beta_1}, \ldots, {\beta_{p_x}})\right\}.
 \label{equ:est_mode_JR_prior}
 \end{equation}
where $\pi^{JR}(\cdot)$ is the JR prior in (\ref{equ:JRprior_no_nugget}), with $a>0$, $b>0$ and $C_l>0$. Here we use the same default prior parameters $b=1$ and $C_l= n^{-1/{p_x}}|x^{max}_{l}-x^{min}_{l}| $ for the reasons discussed in Section \ref{subsec:JR_Calibration}. $a=1/5$ is implemented in  the RobustGaSP Package as a default choice for emulation (\cite{gu2018robustgasp}).

%The default prior parameter for emulation is $a=1/5$, $b=1$ and $C_l= n^{-1/p}|x^{max}_{l}-x^{min}_{l}| $, where $x^{max}_{l}$ and $x^{min}_{l}$ are the maximum and minimum value of the input at the $l$th coordinate, for $l=1,...,p$, which is implemented in the ``RobustGaSP Package. %In practice, a small $a$ is preferred for the 
\section{Variable selection and sensitivity analysis}
\label{sec:variable_selection}

%An inert input of the computer model is the input that barely affect the outputs. 
%Using a weak input
%in emulation has the same drawbacks as using a weak covariate in regression?
%the inaccuracies introduced by incorporating the weak input or covariate into the
%model can lead to worse predictions than omitting them
 %It is undesirable to have  inert inputs in the  computer model emulation [need to elaborate more on it]. 
 
This section discusses the issue for identifying inert inputs of computer models. We first introduce a computationally feasible approach of identifying the inert inputs by the JR prior and then discuss the sensitivity analysis approach.

%the discuss two approaches for this problem. One is the variable selection approach and the other is

\subsection{Identifying the inert inputs by the JR prior}
\label{subsec:JR_selection}

%less discussed in the literature.  Some papers that do so are

Assume the GaSP emulator is used to model the computer model output $f^M(\mathbf x)$. W.l.o.g., we assume the input only appears in the covariance function in (\ref{equ:correlation}).  %Although there are vast amount of literature in the sensitivity analysis of complex computer models,  
Variable selection in this context was studied in \cite{schonlau2006screening,linkletter2006variable,savitsky2011variable}. In \cite{schonlau2006screening}, the variable is selected one by one through a screening algorithm with the functional analysis of the variance, while the number of models to be computed is at the order of $p^2$. In \cite{linkletter2006variable}, the size of the range parameters is used as an indicator to decide whether the input is influential and the full posteriors are sampled from a Metropolis Hasting algorithm.  In \cite{savitsky2011variable}, a spike and slab prior is used for the transformation of the range parameters. However, the difficulty with the model selection strategy comes from the computational burden, as  the model space is $2^{p_x}$, and each evaluation of the likelihood requires $O(n^3)$ flops for the inversion of the covariance matrix.

%As each computation of the likelihood in the GaSP model is already challenging, exploring the model space with $2^p$ order is prohibitively expensive. 

%  estimation by the posterior mode is generally preferred. 

%[revise]

%Estimation by the posterior mode when utilizing the reference prior does have some drawbacks. In particular, the presence of inert inputs, i.e. the inputs that barely affect the outputs, can cause difficulties in the computation.  Having inert inputs is a fairly common scenario with computer models. E.g. for the  TITAN2D computer model for simulating volcanic eruption studied in [],  the internal friction angle input has negligible effect on the output. When using product correlation functions, the hope is that, for an inert input $l$,  $\hat {\mathbf R}_l= \mathbf 1_n \mathbf 1_n ^T$, in which it will not affect the correlation matrix $\mathbf R$. 
%for the inputs in the covariance function, as often

Note that the difficulty of the variable selection in this context comes from the fact that no closed-form marginal likelihood is available. However, when using the product correlation function in (\ref{equ:correlation}), the hope is that, for an inert input $l$,  $\hat {\mathbf R}_l= \mathbf 1_n \mathbf 1_n ^T$, in which it will not affect the correlation matrix $\mathbf R$. Using posterior mode estimation with reference prior, this would happen if ${\hat \gamma_l \to \infty}$. However, as shown in the following lemma, marginal posterior mode estimation with robust parameterizations utilizing the reference prior  cannot identify inert inputs. The proof follows directly from the tail rate computed in Lemma 4.1. and Lemma 4.2. in \cite{Gu2018robustness}.

\begin{lemma}
The marginal posterior of range parameters $\bm \gamma$ (or the logarithm of inverse range parameters $\bm \xi$)  goes to $0$ if some, but not all, $\gamma_l \to \infty$ (or $\xi_l \to 0$), $l=1, \cdots, {p_x}$, for both the power exponential and Mat\'ern correlation functions, when the reference prior in (\ref{equ:refprior}) is used.
\label{lemma:ref_variable_selection}
\end{lemma}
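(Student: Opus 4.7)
The plan is to use the Bayesian factorization and the tail-rate calculations from \cite{Gu2018robustness}. Write the marginal posterior as $p(\bm\gamma \mid \mathbf y)\propto L(\bm\gamma\mid\mathbf y)\,\pi^R(\bm\gamma)$ and show that the first factor is bounded while the second goes to zero in the stated regime. The $\bm\xi$ case then follows by multiplying by the Jacobian $\prod_l|\mathrm d\gamma_l/\mathrm d\xi_l|$, which is locally bounded when some but not all $\gamma_l$ diverge.

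First I would handle the marginal likelihood. For both the Mat\'ern and power exponential correlations, $c_l(d_l)\to 1$ as $\gamma_l\to\infty$, so $\mathbf R_l\to \mathbf 1_n\mathbf 1_n^T$. Writing $E=\{l:\gamma_l\to\infty\}$ and $E^c=\{1,\dots,p_x\}\setminus E\neq\emptyset$ by assumption, the Hadamard product $\mathbf R=\mathbf R_1\circ\cdots\circ\mathbf R_{p_x}$ converges entrywise to $\bigcirc_{l\in E^c}\mathbf R_l$, which is still a positive definite correlation matrix (as a Hadamard product of positive definite matrices with diagonal $1$). Consequently $\log\det\mathbf R$ and the generalized least squares quadratic form in the marginal likelihood remain finite, so $L(\bm\gamma\mid\mathbf y)=O(1)$ along this limit.

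Second, I would appeal directly to the reference prior tail rate from Lemma 4.2 of \cite{Gu2018robustness}. The mechanism is that for each $l\in E$ the partial derivative $\dot{\mathbf R}_l=\partial\mathbf R/\partial\gamma_l$ vanishes entrywise at rate given by the derivative of $c_l$ at large $\gamma_l$; hence the row/column of $\mathbf I^*(\bm\gamma)$ indexed by $l$ (which is built from traces of $\mathbf W_l=\dot{\mathbf R}_l\mathbf Q$) tends to zero, and multilinearity of the determinant forces $|\mathbf I^*(\bm\gamma)|^{1/2}\to 0$ at an explicit polynomial rate. Lemma 4.2 of \cite{Gu2018robustness} records exactly this rate, so no new calculation is needed here.

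Combining the two gives $\pi^R(\bm\gamma\mid\mathbf y)\propto L(\bm\gamma\mid\mathbf y)\pi^R(\bm\gamma)\to 0$. For the $\bm\xi$ parameterization, transforming the density introduces factors $|\mathrm d\gamma_l/\mathrm d\xi_l|$ that stay finite on the relevant face of the parameter space, so the same conclusion follows. The main obstacle would normally be the joint determinantal control of $|\mathbf I^*(\bm\gamma)|^{1/2}$, because the entries depend on $\bm\gamma$ through both $\dot{\mathbf R}_l$ and $\mathbf Q=\mathbf R^{-1}-\mathbf R^{-1}\mathbf H(\mathbf H^T\mathbf R^{-1}\mathbf H)^{-1}\mathbf H^T\mathbf R^{-1}$; but since that analysis has already been carried out in \cite{Gu2018robustness}, the present lemma is essentially a direct corollary of the cited tail-rate lemmas together with the elementary boundedness of the marginal likelihood derived above.
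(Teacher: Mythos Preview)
Your approach is the same as the paper's: invoke the tail-rate results of \cite{Gu2018robustness} (the paper cites Lemmas 4.1 and 4.2 there) together with boundedness of the marginal likelihood. Your treatment of the $\bm\gamma$ parameterization is fine and essentially what the paper intends.

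However, your handling of the $\bm\xi$ parameterization has a genuine gap. You claim that the Jacobian factors $|\mathrm d\gamma_l/\mathrm d\xi_l|$ ``stay finite on the relevant face of the parameter space,'' but this is false: since $\gamma_l=e^{-\xi_l}$, one has $|\mathrm d\gamma_l/\mathrm d\xi_l|=\gamma_l$, which \emph{diverges} for each $l\in E$ as $\gamma_l\to\infty$. Thus the transformed density picks up a diverging factor $\prod_{l\in E}\gamma_l$, and boundedness of the Jacobian cannot be the argument. What actually makes the $\bm\xi$ case work is the \emph{specific} polynomial decay rate of $|\mathbf I^*(\bm\gamma)|^{1/2}$ in the $\gamma_l$ for $l\in E$: for the power exponential, $\dot{\mathbf R}_l=O(\gamma_l^{-\alpha_l-1})$, and factoring this from both the $l$th row and column of $\mathbf I^*$ gives $\pi^R(\bm\gamma)=O\bigl(\prod_{l\in E}\gamma_l^{-(\alpha_l+1)}\bigr)$, which dominates the Jacobian growth $\prod_{l\in E}\gamma_l$ because $\alpha_l>0$ (and similarly for Mat\'ern). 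This is precisely the content of the cited lemmas in \cite{Gu2018robustness}, so you should invoke the explicit rate rather than assert Jacobian boundedness.
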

%\begin{proof}
%It is a direct consequence of Lemma~\ref{asymarlikprior2} 
%\end{proof}

%The above corollary means that using the marginal posterior of $\bm \gamma$ or $\bm \xi$ is not able to identify these noisy inputs. It is less likely that reference prior be parameterized to both maintain the robustness feature defined above and capable of identifying the noisy inputs with its posterior modes. The issued of inert inputs can be solved by  a variable selection step for building the GaSP model (\cite{linkletter2006variable,savitsky2011variable}). If all inputs are interested to be used, a new class of prior, called jointly robust prior, that  is both robust and capable of identifying the noisy inputs  with the posterior modes, is developed in \cite{Gu2016thesis}.

% This is  unfortunately caused by the reference prior, as the prior density is zero when $\mathbf R_l = \mathbf 1_n\mathbf 1^T_n$ for any $l=1,...,p$

% This is because in the robustness parametrization $\gamma$ and $\xi$, the prior density is zero when $\mathbf R_l = \mathbf 1_n\mathbf 1^T_n$ for any $l=1,...,p$.

According to Lemma~\ref{lemma:ref_variable_selection}, the marginal posterior mode with two parameterizations will never appear at  $\hat {\mathbf R}_l = \mathbf 1_n \mathbf 1_n ^T$ for any $l$, as the posterior density is $0$ if  $\hat {\mathbf R}_l = \mathbf 1_n \mathbf 1_n ^T$ for some but not all $l$. The identifiability of inert inputs with the posterior mode estimation, however, requires the posterior density is positive when  $\mathbf R_l = \mathbf 1_n\mathbf 1^T_n$ for some but not all $l$ (otherwise it is not a robust parameter estimation).  Other transformation of the reference prior is also less likely to both maintain the robustness parametrization and  identify  inert inputs. Such difficulties could lead to inferior prediction results when some inert inputs are present in computer models.

%[We discuss...variable selection is slow...]
%(or equivalently  $\xi_l \to -\infty$) under two robust parameterization by $\bm \gamma$ and $\bm \xi$

 %As the product correlation in (\ref{equ:correlation}) is used, we hope for an inert input $l$,  $\hat {\mathbf R}_l= \mathbf 1_n \mathbf 1_n ^T$, so the inert input is excluded from the correlation matrix. However, as shown in Lemma~\ref{lemma:ref_variable_selection}, the posterior with the reference prior is zero if some but not all $\gamma_l\to \infty$. Thus the posterior mode with the reference prior can't perform variable selection. 
 
 Luckily, the marginal posterior mode with the JR prior is positive when $\hat {\mathbf R}_l= \mathbf 1_n \mathbf 1_n ^T$ for some but not all $l$, stated in the following lemma.  %posterior   with the JR prior can handle the variable selection.

\begin{lemma}
The marginal posterior of inverse range parameters $\bm \beta$ is positive if $\mathbf 1_n \notin \mathcal C(\mathbf H)$ and some, but not all, $\beta_l \to 0$, $l=1, \cdots, {p_x}$, for both the power exponential and Mat\'ern correlation functions, when the  JR prior in (\ref{equ:JRprior_no_nugget}) with $a>0$, $b>0$ and $C_l>0$.
\label{lemma:JR_inert_inputs}
\end{lemma}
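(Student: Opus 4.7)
The approach is to factor the marginal posterior as the integrated likelihood times the JR prior (up to a constant) and show each factor is strictly positive at the stated limit. The prior part is immediate from Remark~\ref{remark:tailrates_JRprior}(iii): when some, but not all, $\beta_l\to 0$, the linear form $\sum_{l=1}^{p_x}C_l\beta_l$ stays bounded and strictly positive, so $\pi^{JR}(\bm\beta)=C_0(\sum_{l=1}^{p_x} C_l\beta_l)^{a}\exp\{-b\sum_{l=1}^{p_x} C_l\beta_l\}$ is finite and strictly positive because $a>0$, $b>0$, and $C_l>0$.

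For the likelihood, I would first identify the limiting correlation matrix. Let $A=\{l:\beta_l\to 0\}$ and $B=\{1,\ldots,p_x\}\setminus A$, which is non-empty by hypothesis. Since $\beta_l\to 0$ is equivalent to $\gamma_l\to\infty$, both the power-exponential correlation in (\ref{equ:pow_exp}) and the Mat\'ern correlation in (\ref{equ:matern}) satisfy $c_l(d_l)\to 1$ uniformly over pairs of design inputs, so $\mathbf R_l\to\mathbf 1_n\mathbf 1_n^T$ for each $l\in A$. Because Hadamard multiplication by $\mathbf 1_n\mathbf 1_n^T$ leaves any other matrix unchanged, the product correlation collapses in the limit to $\mathbf R^*$, defined as the Hadamard product over $l\in B$ of the matrices $\mathbf R_l$, i.e.\ the product correlation one would build on the active coordinates alone. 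For a non-degenerate design $\mathbf R^*$ is strictly positive definite, so integrating out $(\bm\theta_m,\sigma^2)$ against the conjugate prior from (\ref{equ:JR_all}) gives a marginal density proportional to $|\mathbf R^*|^{-1/2}|\mathbf H^T(\mathbf R^*)^{-1}\mathbf H|^{-1/2}(\mathbf y^T\mathbf Q^*\mathbf y)^{-(n-q)/2}$ at the limit, with each of the three factors finite and positive.

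The delicate point, and the main obstacle, is strict positivity of the residual quadratic form $\mathbf y^T\mathbf Q^*\mathbf y$. The hypothesis $\mathbf 1_n\notin\mathcal C(\mathbf H)$ is what ensures this: the collapsed factors $\mathbf R_l=\mathbf 1_n\mathbf 1_n^T$ for $l\in A$ make $\mathbf 1_n$ a dominant eigendirection of $\mathbf R^*$, and the component of $\mathbf y$ along $\mathbf 1_n$ cannot be absorbed into the regression whenever $\mathbf 1_n$ lies outside $\mathcal C(\mathbf H)$, so the generalized residual cannot vanish. Technically, I would adapt the residual-quadratic bound used in Lemma~4.1 of \cite{Gu2018robustness}, which handles the full limit $\mathbf R\to\mathbf 1_n\mathbf 1_n^T$, to the partial limit $\mathbf R\to\mathbf R^*$; the algebra is cleaner here because $\mathbf R^*$ is still invertible. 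Multiplying the strictly positive prior and marginal likelihood then gives a strictly positive marginal posterior, as claimed.
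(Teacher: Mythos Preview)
Your overall strategy---factor the marginal posterior as marginal likelihood times JR prior and show each factor is strictly positive at the partial limit---is exactly what the paper does. The paper handles the prior via Remark~\ref{remark:tailrates_JRprior}(iii), just as you do, and for the likelihood part it simply cites the relevant tail-rate results (Lemma~3.1 and Lemma~4.1 in \cite{gu2018robustgasp}) rather than computing the limit explicitly.

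Where your argument goes off the rails is the paragraph explaining the role of $\mathbf 1_n\notin\mathcal C(\mathbf H)$. You claim that the collapsed factors $\mathbf R_l=\mathbf 1_n\mathbf 1_n^T$ for $l\in A$ make $\mathbf 1_n$ a dominant eigendirection of $\mathbf R^*$, but this contradicts your own earlier (correct) observation that Hadamard multiplication by the all-ones matrix is the identity operation: the collapsed factors disappear from the product entirely, and $\mathbf R^*$ depends only on the active coordinates $B$. So $\mathbf 1_n$ has no distinguished role in $\mathbf R^*$, and your heuristic that the $\mathbf 1_n$-component of $\mathbf y$ must survive in the generalized residual does not follow. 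In fact, for invertible $\mathbf R^*$ one has $\mathbf y^T\mathbf Q^*\mathbf y>0$ if and only if $\mathbf y\notin\mathcal C(\mathbf H)$, which is a condition on the data vector and is unrelated to whether $\mathbf 1_n$ lies in $\mathcal C(\mathbf H)$. The precise way the hypothesis $\mathbf 1_n\notin\mathcal C(\mathbf H)$ enters is through the likelihood tail-rate lemmas in \cite{gu2018robustgasp} that the paper cites; you should appeal to those directly rather than to the eigenvector heuristic, which as written is incorrect.
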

%\vspace{-.2in}
The proof of Lemma~\ref{lemma:JR_inert_inputs} follows from the tail rate of the marginal likelihood of the GaSP model (Lemma 3.1 and Lemma 4.1 in \cite{gu2018robustgasp}) and the tail rate of the JR prior in Remark \ref{remark:tailrates_JRprior}. 
%Assume the inverse parameters $(\hat \beta_1,...,\hat \beta_p)$ are estimated in Equation (\ref{equ:mode_est_JRprior}). 
When $\beta_l=0$, the  $l$th input is not in the covariance in GaSP model. In practice, exact zero estimation is not likely to be obtained. Thus, we  use the {normalized inverse range parameters} as an indicator of the importance of each  input
\begin{equation}
 \hat P_l =\frac{C_l \hat \beta_l}{\sum^{{p_x}}_{i=1}C_i\hat \beta_i},
 \label{equ:P_l}
 \end{equation} 
 where  $(\hat \beta_1,...,\hat \beta_{p_x})$ are estimated in Equation (\ref{equ:est_mode_JR_prior}). The involvement of $C_l$ is to take  the scale of  different inputs into account. The  part $\sum^{{p_x}}_{l=1}C_l\hat \beta_l$ in the denominator is the overall size of the estimation and the $C_l \hat \beta_l$ is the contribution by the $l$th input. 
 %The sum is of $P_l$ is $1$ and the average $P_l$ is $1/p$.

 The size of the inverse range parameters has been used to infer which the input is inert, but with a different prior (\cite{linkletter2006variable}). However, the jointly robust prior yields better results, as compared in the Section~\ref{subsec:numerical_VS}.

One may use a certain threshold of the normalized inverse range parameters to predict whether the input is inert or not, i.e. 
\begin{equation}
\hat P_l\leq {p_0}/{{p_x}},
\label{equ:P_l_p_0}
\end{equation}
 where $p_0$ may be chosen as a constant between 0 to 1.  Such values could also depend on the number of observations, dimension of the inputs and expected number of inputs to be chosen. However, because all inputs affect the outputs in a computer model, the threshold might be less important and can be chosen based on the scientific goal.
We do not try to present a method for the full model selection, as each computation of the likelihood can be expensive. The point, here, is that the computation of the  $P_l$ does not take any extra computation (as the posterior modes are typically needed for building a GaSP model), and can serve as a indicator to tell an input is inert or not.

%Separating the inert inputs and influential inputs is also meaningful for the calibration problem, as one has less parameters to explore.

%The main distinction of identifying an inert input in a computer model is that all inputs affect the outputs. 
%Identifying an inert input in a computer model is slightly different than the variable selection in statistics. 

% We discuss two approaches for this problem.

%Thus, it is of interest to separate the influential inputs and inert inputs. Two related approaches are studied before, introduced as follows.
% computer model 

%In particular, the presence of inert inputs, i.e. the inputs that barely affect the outputs, can cause difficulties in the computation.  Having inert inputs is a fairly common scenario with computer models. E.g. for the  TITAN2D computer model for simulating volcanic eruption studied in [],  the internal friction angle input has negligible effect on the output.
%by generating $\mathbf x_i \sim \pi(\mathbf x)$ and running the computer model on $\mathbf x_i$, $1\leq i\leq n$.

\subsection{Sensitivity analysis}
\label{subsec:sensitivity_analysis}
{{Sensitivity analysis}} in computer model concerns with the problem of learning how changes of inputs affect the outputs.  
%The change of outputs at a specific input point $\mathbf x=\mathbf x_0$ can be studied using the derivative of the computer model at this point. Such study is called the {{local sensitivity analysis}}. An related and more interested topic is the {{global sensitivity analysis}}, which studies the change of outputs associated with the whole input ranges (see e.g. \cite{iooss2014review} for a review of these studies). 
%The change of outputs can be numerically analyzed through the simulated $f^M(\mathbf x_i)$, when the computer model is allowed to run many times.
%the decomposition of the variance of a function, or
%The inputs, in the computer model, typically associate with a distribution $\pi(\mathbf x)$. The uncertainty in  $\pi(\mathbf x)$ can be propagated through the computer model by forward simulation.
The inputs, in the computer model, typically associate with a distribution $\pi(\mathbf x)$, reflecting the belief of the input values. One of the main goal related to the  sensitivity analysis is to identify how much a set of inputs influence the variability of outputs, which is studied through the functional analysis of the variance ({functional ANOVA}). 

It is possible to decompose a function $f^M(\cdot)$ as follows (\cite{hoeffding1948class})
% (\cite{hoeffding1948class})
\[f^M(\mathbf x) =z_0+\sum^{{p_x}}_{i=1} z_i(x_i)+\sum^{{p_x}}_{i<j}z_{ij}(\mathbf x_{i,j})+...+z_{12...{p_x}}(\mathbf x),\]
where $\mathbf x_{i,j}=(x_i,x_j)$ and $\mathbf x=(x_1,...,x_{p_x})$.  One can obtain these element functions by taking expectation on $\mathbf x$: $z_0=\E[f^M(\mathbf x)], $ $z_i(x_i)=\E[f^M(\mathbf x)|x_i]-z_0$, $z_{ij}(\mathbf x_{i,j})=\E[f^M(\mathbf x)|\mathbf x_{i,j}]-z_0-z_i-z_j, $ and so on.
%This Expansion is unique under the condition (\cite{sobol1990sensitivity}),
%\[\int z_{k}(\mathbf x_{k})d{\mathbf x_{k}}=0, \]
%for any possible index $k=i_1,...,i_s$.
%\begin{align*}
%z_0&=\E[f^M(\mathbf x)], \\
%z_i(x_i)&=\E[f^M(\mathbf x)|x_i]-z_0, \\
%z_{ij}(\mathbf x_{i,j})&=\E[f^M(\mathbf x)|\mathbf x_{i,j}]-z_0-z_i-z_j, \\
%& \cdots
%\end{align*}
Here $z_i(x_i)$ is often referred as the main effect and $z_{i,j}(\mathbf x_{i,j})$ is referred as the second order effect.  
%The inputs, in the computer model, typically associate with a distribution $\pi(\mathbf x)$, reflecting the belief of the input values (and thus the expectation is on $\mathbf x$).
% It is not hard to see these indices are also orthogonal, e.g. $\E[z_i(x_i) z_j(x_j)]=0$.
%Further define $V_i=\Var[\E[f^M(\mathbf x)|x_i]]$.

The variance of the function can be decomposed (\cite{efron1981jackknife})
\[\Var[f^M(\mathbf x)]= \sum^{{p_x}}_{i=1}{W_i}+\sum^{{p_x}}_{i<j}{W_{ij}} +...+W_{12...{p_x}}, \]
where $W_i=\Var[\E[f^M(\mathbf x)|x_i]]=\Var[z_i(x_i)]$, $W_{ij}=\Var[\E[f^M(\mathbf x)|\mathbf x_{i,j}]]-W_i-W_j $.   Two principal measures called the main effect index and the total effect index were defined as (\cite{sobol1990sensitivity})
%and $V_{ij}=\Var[E[f(\mathbf x)|\mathbf x_{i,j}]]$
\begin{align*}
S_i&=V_i/\Var[f^M(\mathbf x)], \\
S_{T_i}&=V_{T_i}/\Var[f^M(\mathbf x)],
\end{align*}
where $V_i=W_i=\Var[\E[f^M(\mathbf x)|x_i]]$ and $V_{T_i}=\Var[f^M(\mathbf x)]-\Var[\E(f^M(\mathbf x)\mid \mathbf x_{-i})]$. $S_i$ is referred as the main effect index of $x_i$ and $S_{T_i}$ is referred as the total effect index of $x_i$.

%I think the normalized version is then When the design is normalized to be $[0,1]^p$ and have the same uniform distribution,

As pointed out in \cite{oakley2004probabilistic}, $S_i$ has very clear interpretation. If we were to know the real value of the $i$th input, denoted as $x^{r}_i$, the uncertainty left  is thus $\Var[f^M(\mathbf x)\mid x_i= x^{r}_i]$, and the decrease of the uncertainty is  $\Var[f^M(\mathbf x)]-\Var[f^M(\mathbf x)\mid x_i=x^{r}_i]$. Since we do not know $x_i$, it is common to take the expectation. Consequently, the decrease of the variance is then $\Var[\E[f^M(\mathbf x)\mid x_i]]=V_i$. It means if we were able to select one input to explore its true value, we will select $x_i$ that maximizes $V_i$.

However, if one were able to select two inputs to explore, the answer is {not} to select the largest main effect index, but to select the largest $V_{i,j}$ as follows
\[V_{i,j}=\Var[\E(f^M(\mathbf x)\mid \mathbf x_{i,j})]=\Var[z_i(x_i)+z_j(x_j)+z_{ij}(\mathbf x_{ij})].\]
%Now people might only look at the main effect indices and select the largest $k$ number of them to explore and the
Thus,  many higher order indices are needed to compute if one are interested in exploring more than the first few influential inputs. Main effect indices may serve as an approximation, and  they are frequently used due to the computational reason.
% however, the total number of  indices is $2^{p_x}$, which is computationally intensive

%{\red Remarks}: All the quantities like $z_i(X_i)$, $V_i$, $W_i$ all depend on the (posterior) distribution of the $i^{th}$ inputs $\pi (x_i)$. It might be vague to say outputs are more sensitive to the change of one input without being looking at the distribution of this input, unless the inputs are all normalized to the same range and do not depend on each other. Yet in Bayesian model selection, posterior probability of each model is computed and does not depend on $\pi (x_i)$. It is then not surprised the results of these two will differ. We might be able marginalize $\pi(.)$ out in a principle way and define the posterior of each model after the marginalization (we haven't had it done).  These two approaches and goals look to be connected in some senses but it is not directly clear to me yet. Our paper might be criticized to compare with Sobol indices to do model selection since it is initially not for this purpose.

When  $f^M(\mathbf x)$ and $\pi(\mathbf x)$ have simple  forms,  the main effect indices and higher order indices may be computed explicitly. However, these indices generally do not have a closed form expression, thus the numerical estimation of these indices becomes important. Monte Carlo methods are proposed to evaluate these indices (\cite{sobol1990sensitivity,sobol2001global}).

The shortage of the Monte Carlo method is that lots of computer model runs are often needed for numerically estimation, which  is unrealistic when the computer model is slow. One approach that significantly reduces the number of evaluation of the functions is discussed in \cite{oakley2004probabilistic}. The idea is to use a small number of runs to fit the GaSP emulator and  the posterior predictive distribution is used to replace the computer model outputs. The estimation of the indices can be implemented based on the emulator built on only very small number of runs from the computer model. We compare with these methods in Section~\ref{sec:numerical}.

\section{Numerical Study}
\label{sec:numerical}
\subsection{Emulation}
\label{subsec:emulation}
We numerically compare the predictive performance of GaSP emulator using the  marginal posterior mode estimation with the JR prior and reference prior. For the reference prior, we choose the log inverse range parameterization, $ \xi_l=\log(1/\gamma_l)$, because it is both robust and has empirically better predictive performance than the $\gamma_l$ parameterization (\cite{Gu2018robustness}).  Both methods are implemented in the RobustGaSP package (\cite{gu2018robustgasp}). The  Mat{\'e}rn correlation with $\alpha_l=5/2$ is used and a constant mean basis function is assumed for all cases (i.e. $h(\mathbf x)=1$). Also included are the results from  the DiceKriging Package   (\cite{roustant2012dicekriging}) with the same correlation function and mean basis function. 

In each experiment, we use $n$ inputs to construct the GaSP emulator, where $n$ is typically chosen to be around $10{p_x}$, and then record the out-of-sample normalized root of mean square error (NRMSE) of $n^*=10,000$  held-out outputs. We repeat the experiments for $N=200$ random designs, generated from the maximin LHD (\cite{lhspackage}), and report the  average normalized root of mean square error (Avg-NRMSE):
\begin{eqnarray}
\label{equ:NRMSE}
\begin{split}
\mbox{NRMSE}_j&= \sqrt{  {\sum\limits_{i = 1}^{{n^{*}}} {{(y(\mathbf x^{*}_{ij})  - \hat y(\mathbf x^{*}_{ij})   )}^2}}/   \sum\limits_{i = 1}^{{n^{*}}} {{(y(\mathbf x^{*}_{ij})  - \bar {\mathbf y}_j   )}^2}}, \\
\mbox{Avg-NRMSE}&=\frac{1}{N}\sum^N_{j=1}\mbox{NRMSE}_j, \end{split}
\end{eqnarray}
with $\mathbf x^{*}_{ij}$ being the $i${th} held-out input in the $j${th} experiment, $ \hat y(\mathbf x^{*}_{ij})$ being its prediction and  $\bar {\mathbf y}_j$ being the mean of the observed output in the $j$th experiment, $j=1,...,N$.  
%Avg-NRMSE should range from 0 to 1 for an effective method. 

We  test the following  functions (implemented in \cite{simulationlib}).
\begin{example}
 \begin{itemize}
  \item[i.]    $Y=[X_2-5.1X^2_1/(4\pi^2)+5X_1/\pi-6]^2+10[1-1/(8\pi)]\cos(X_1)+10$ where $X_i \in [0,1]$, for  $i=1,2$.

  \item[ii.]   $Y=4(X_1-2+8X_2-8X^2_2)^2+(3-4X_2)^2+16\sqrt{X_3+1}(2X_3-1)^2$, where $X_i \in [0,1]$, for $i=1,2,3$.

    \item[iii.]     $Y=2\exp\{\sin[0.9^8(X_1+0.48)^8]\} +X_2X_3+X_4$, where  $X_i \in [0,1)$, for $i=1,2,3,4$.

   \item[iv.]      $Y=10\sin(\pi X_1X_2)+20(X_3-0.5)^2+10X_4+5X_5$, where  $X_i \in [0,1]$, for $i=1,2,3,4,5$.

   \item[v.]   $Y=\frac{2\pi X_3(X_4-X_6)}{ \ln(X_2/X_1)\{1+{2X_7X_3}/ [ \ln(X_2/X_1)X_1^2X_8  ] +{X_3}/{X_5} \}},$
 where $  X_1 \in [0.05, 0.15]$, $X_2 \in [100, 50000]$, $X_3 \in [63070, 115600]$ , $X_4 \in [990, 1110]$, $X_5 \in [63.1, 116]$, $X_6 \in  [700, 820]$, $X_7 \in [1120, 1680]$ and $X_8 \in [9855, 12045]$ are the 8 inputs. 
   \end{itemize}
   \label{eg:emulation}
   \end{example}
%The first function is the Branin function ()

\begin{table}[t]
\begin{center}
\begin{tabular}{llll}
  \hline
                     & Robust GaSP $\bm \xi$ & JR prior  &DiceKriging  \\

  \hline
  case i               & $.028$ ($.15$ s) &$.028$ ($.054$ s) &$.063$ ($.029$ s) \\
  case ii                & ${.011}$ ($.53$ s) &$.011$ ($.10$ s) &$.061$ ($.04$ s)  \\
  case iii              & $.059$ ($1.1$ s) &$.051$ ($.15$ s) &$.21$ ($.059$ s)   \\
  case iv         & ${.018}$ ($3.3$ s) & $.018$ ($.37$ s)& $.10$ ($.074$ s)   \\
  case v         & ${.0093}$ ($20$ s)& $.0094$ ($1.4$ s)& $.094$  ($.43$ s)   \\

    \hline
\end{tabular}
\end{center}
%$N=500$  designs are generated and $n^*=10,000$ is used each time for testing. The forms of the GaSP Model and separable kernel used are kept the same for 4 different methods.
   \caption{Avg-NRMSE and average computational time in seconds for parameter estimation in the bracket of the three estimation procedures for the five experimental functions in Example~\ref{eg:emulation}. From the upper row to the lower row, the sample size is $n=30, 40, 50, 60$ and  $80$ for these five cases, respectively.  }
   \label{tab:Avg-NRMSE}
\end{table}
%of the marginal posterior mode estimation
The Avg-NRMSEs of three methods of the five testing functions in Example~\ref{eg:emulation} are shown in Table~\ref{tab:Avg-NRMSE}. The Avg-NRMSE of the methods with the reference prior and JR prior is similar, whereas the computational time of the JR prior is smaller, as the closed-form derivatives of the JR prior are known. The DiceKriging is the fastest method among three but the predictions are not as good as the  robust methods.

The difference of the computational time for searching the posterior mode between the JR prior and reference prior becomes larger (shown in Figure 4.4. in \cite{Gu2016thesis}). This is because the numerical derivative of the reference prior requires many extra evaluations of the likelihood, each having $O(n^3)$ operations. In comparison, when the JR prior is used, we only need to compute the Cholesky decomposition of the correlation matrix once for the inversion of the covariance matrix in each iteration, because of the closed form derivatives.

%Thus, we recommend to use the JR prior in the GaSP emulator for both the predictive accuracy and computational efficiency in emulation. 

%. E.g. the Avg-NRMSE by the DiceKriging package is five and ten times as large as the ones by the JR prior in case (iv) and case (v), respectively
%Both the predictions with the reference prior and JR prior outperformed the DiceKriging. 

\subsection{Variable Selection}
\label{subsec:numerical_VS}

We first study the following example reported in \cite{linkletter2006variable}.
\begin{example}
 i. $Y=0.2 X_1+0.2 X_2+0.2X_3+0.2X_4+\epsilon,  $
where $\epsilon \sim N(0,0.05^2)$ and $X_l \in [0,1]$, $l=1,..,4$. 6 completely noise input variables are also added.

ii. $Y=0.2 X_1+0.2/2 X_2+0.2/4X_3+0.2/8X_4 +0.2/16X_5+0.2/32 X_6+0.2/64X_7+0.2/128X_8+\epsilon,$
where $\epsilon \sim N(0,0.05^2)$ and $X_l \in [0,1]$, $l=1,..,8$. 2 completely noise input variables are also added.
\label{eg:VS_eg1}
\end{example}

The number of design points is $n=54$ and the Gaussian correlation function is assumed for both functions in Example~\ref{eg:VS_eg1}, same as in \cite{linkletter2006variable}.  The functions are linear, however, in the GaSP model, we only use the constant mean function, pretending that we don't know the linear trend of the real function. $N=1,000$ random designs are generated from the maximin LHD design (\cite{lhspackage}).
%The purpose of Example~\ref{eg:dheg1} is to see whether GaSP can identify the real signal, and the purpose of Example~\ref{eg:dheg2} is  to see how small the signal that GaSP can identify.  A small noise is added to represent the error induced by the numerical solution of the computer model.

\begin{figure}[t]
\centering
  \begin{tabular}{cc}
	\includegraphics[height=.4\textwidth,width=.5\textwidth]{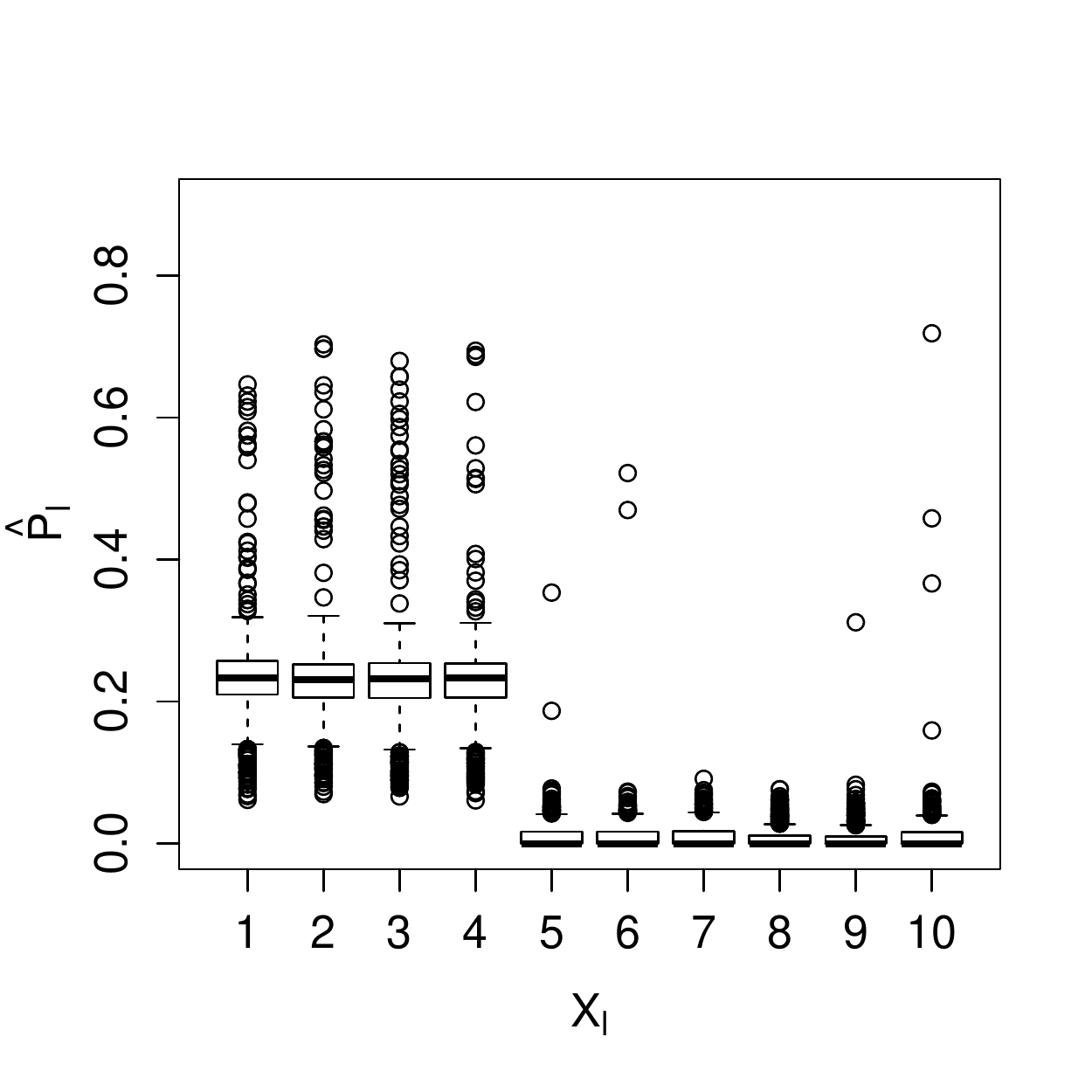}
	\includegraphics[height=.4\textwidth,width=.5\textwidth]{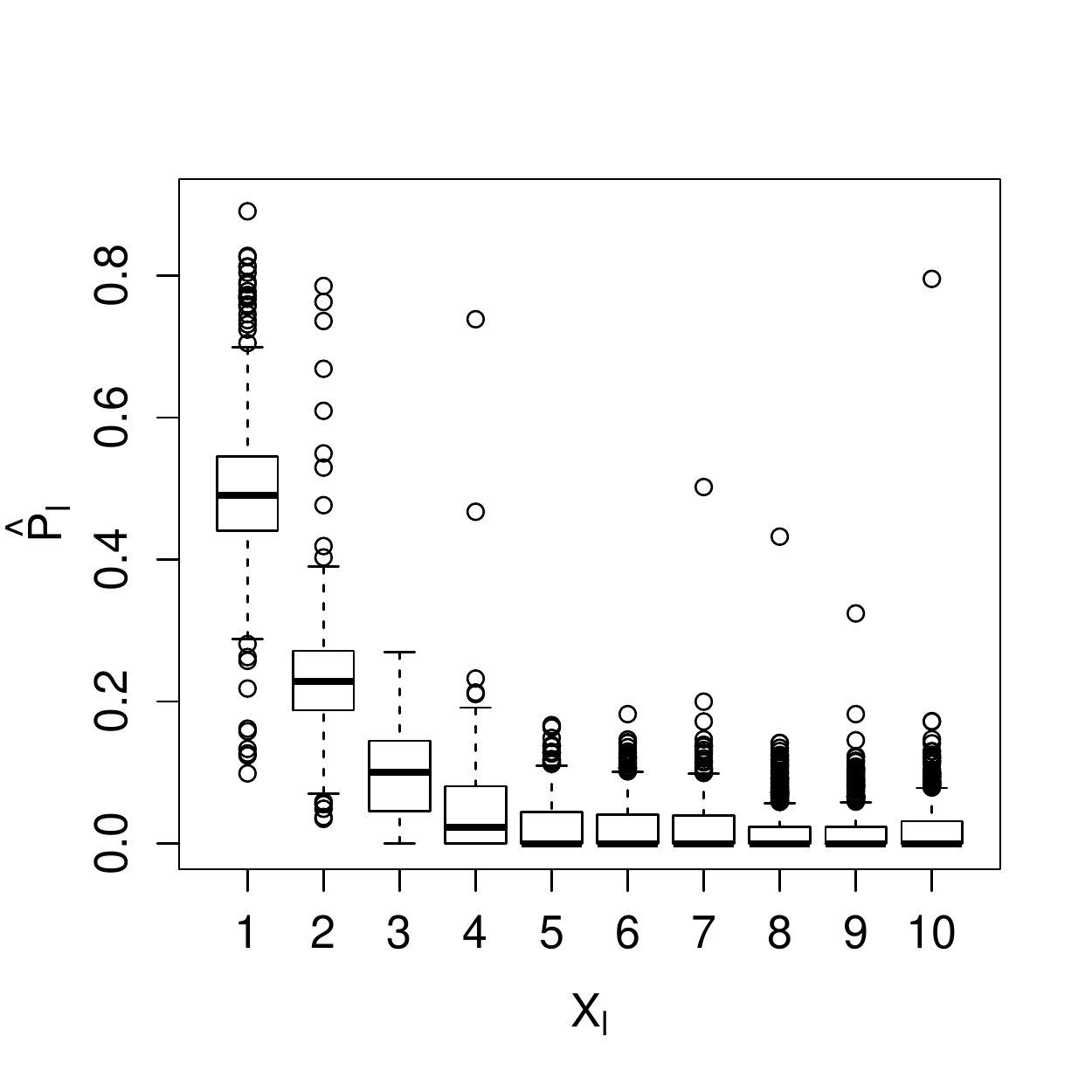}
  \end{tabular}
  \vspace{-.2in}
   \caption{Normalized inverse range parameter $\hat P_l$, $l=1,...,p_x$, for case i (left panel) and case ii (right panel) in Example~\ref{eg:VS_eg1}.  }
\label{fig:david_higdon_2eg}
\end{figure}

The  normalized inverse range parameter $\hat P_l$ are shown in Figure~\ref{fig:david_higdon_2eg}. In the left figure, it is clear that the first four inputs are much more important than the rest of input. Indeed these are 4 signals while the other are noises. The second figure shows that the normalized inverse range parameters can identify the largest 3 to 4 signals. A detailed comparison with results in \cite{linkletter2006variable} are shown in Table~\ref{tab:dheg1_2}. In both cases,  using $\hat P_l$ with the JR prior has smaller false positives and false negatives, compared with the reference distribution variable selection (RDVS) method in \cite{linkletter2006variable}.

\begin{table}[t]
   \caption{Proportion of times each input is identified as influential inputs in Example~\ref{eg:VS_eg1} by the JR prior with different $p_0$ in (\ref{equ:P_l_p_0}) and RDVS method in \cite{linkletter2006variable} with different percentiles (PT). The normalized inverse range parameter $\hat P_l$, $l=1,...,p_x$,  in (\ref{equ:P_l}) with JR prior and different $p_0$ is used to identify the inert inputs.
   %JR prior denotes the posterior mode estimation with the jointly robust prior and $P_l$ with different $p_0$ is used to identify the inert inputs.
    %RDVS method in \cite{linkletter2006variable} with different percentiles (PT) are shown.  
    }

\small
\begin{tabular}{lcccccccccc}
  \hline
   case i                      & 1&2&3&4&5&6&7&8&9&10  \\
    \hline
  JR prior, $p_0=1$                     &.974&.979&.967&.974 &.003&.004&.001&.002&.002&.006\\
    JR prior, $p_0=.75$                     &.994&.997&.997&.995&.006&.008&.005&.007&.006&.009\\
    JR prior, $p_0=.5$                     &1&1&1&1&.034&.033&.035&.032&.022&.043\\
   \hline
  RDVS, $5$th PT            &.619 &.618& .717 &.631& .030& .034& .021& .074& .051& .051 \\
   RDVS, $10$th PT          &.852 &.855& .910 &.880&.061& .064& .053& .137& .076& .102 \\
    RDVS, $15$th PT          &.947 &.954& .973 &.955&.079& .091& .080& .173& .108& .135 \\
  \hline
   case ii                  & 1&2&3&4&5&6&7&8&9&10  \\
    \hline
  JR prior, $p_0=1$                     &.998&.967&.504&.165 &.074&.050&.041&.038&.026&.041\\
    JR prior, $p_0=.75$                     &1&.982&.633&.267&.126&.119&.093&.089&.072&.111\\
    JR prior, $p_0=.5$                     &1&.993&.736&.383&.217&.214&.196&.165&.157&.187\\
   \hline
  RDVS, $5$th PT            &.679 &.180 &.062 &.025 &.016 &.023 &.017 &.031 &.009 &.036 \\
   RDVS, $10$th PT          &.889 &.379 &.133 &.058 &.034 &.051 &.035 &.067 &.030 &.094 \\
    RDVS, $15$th PT          &.959 &.540 &.217 &.092 &.061 &.098 &.065 &.107 &.063 &.149 \\

    \hline
\end{tabular}
%$N=500$  designs are generated and $n^*=10,000$ is used each time for testing. The forms of the GaSP Model and separable kernel used are kept the same for 4 different methods.
   \label{tab:dheg1_2}
\end{table}

%one major feature of the variable selection in computer models is that
%so they are real signals
The cut-off value $p_0$ for the JR prior can be hard to define. However, typically all inputs influence the outputs of the computer models.  The task is then not to identify the true signals, but to identify what set of inputs are more important than the others. This seems successful for both functions in Example \ref{eg:VS_eg1}, as the importance of the factors is correctly ordered, shown in Figure \ref{fig:david_higdon_2eg}. 

  In the following Example \ref{eg:manyfunct_noise}, we test  the following four functions (implemented in \cite{simulationlib}) to check whether the method can identify the set of signals.  For the approach with the JR prior, we use the normalized inverse range parameters in (\ref{equ:P_l}) of each input as the index of whether it is a signal. We evaluate the performance by the fraction of times when the smallest index of the signals is larger than the largest index of the noises over $N$ experiments.
  
  %have the correct signal to noise order (SNO), defined to be .

%We  test the following four functions (implemented in \cite{simulationlib}):

%

\begin{example}
 \item[i.]   $Y=\frac{1}{6}[(30+5X_1 sin(5X_1))(4+exp(-5X_2))-100]  +\epsilon$, $X_i \in \mbox{Unif} (0,1)$, $i=1,...,7$, $\epsilon \in N(0,0.3^2)$.

 \item[ii.]  $Y=4(X_1-2+8X_2-8X^2_2)^2+(3-4X_2)^2+16\sqrt{X_3+1}(2X_3-1)^2+\epsilon$, $X_i \in \mbox{Unif} (0,1)$, $i=1,...,6$, $\epsilon \in N(0,0.05^2)$.

 \item[iii.]  $Y=\frac{2}{3}\exp(X_1+X_2)-X_4\sin(X_3)+X_3+\epsilon$,  $X_i \in \mbox{Unif} (0,1)$, $i=1,...,8$, $\epsilon \in N(0,0.15^2)$.

 \item[iv.] $Y=10sin(\pi X_1X_2)+20(X_3-0.5)^2+10X_4+5X_5+\epsilon$, $X_i \in \mbox{Unif} (0,1)$, $i=1,...,10$, $\epsilon \in N(0,0.2^2)$.
 \label{eg:manyfunct_noise}
\end{example}
%The first function is a slightly modified version of \cite{lim2002design} by adding 5 noisy inputs and a small Gaussian noise, the second one being a modified version in \cite{dette2010generalized}, the third one being a modified version in  \cite{park1991tuning} and the fourth one being a modified version in  \cite{friedman1991multivariate}.

%    to have the correct signal-noise order  larger than $85\%$ of the times in all experiments in Example~\ref{eg:manyfunct_noise}.  The number is the sample size that is tested and the proportion of the correct signal-noise order is recorded in the bracket with this sample size.
\begin{table}[t]
   \caption{Tested sample size and the fraction of times when the smallest index of the signals is larger than the largest index of the noises recorded in the bracket for Example~\ref{eg:manyfunct_noise} by different methods over $N=200$ experiments. }
\begin{center}
\small
\begin{tabular}{p{.5in}p{.7in}cccc}
  \hline
                    &JR prior & Sobol GaSP &Sobol & Sobol2007 S &Sobol2007 T  \\
                   %      &    est. nugget &  est. nugget    &  est. nugget \\
  \hline

   Case i &  $20 \, (.960)$   &$20  \,(.970)$ &$130  \,(.835)  $ & $130  \,(.825)$ & $130  \,(.820)$                    \hfill \\
    Case ii &    $ 35  \,(.965)$ &  $40 \,(.290)$   & $20,000  \,(.800)$ &$20,000  \,(.780)$& $600  \,(.820)$          \hfill \\
   Case iii &  $35  \,(.905)$  & $35  \,(.795)$ & $4,000  \,(.695)$  & $4,000  \,(.675)$& $4,000  \,(.765)$ \hfill \\
   Case iv &     $35  \,(.920)$ &  $35  \,(.780)$& $1,000  \,(.810)$  & $1,000  \,(.805)$ & $1,000  \,(.790)$ \hfill \\
   \hline

\end{tabular}
\end{center}
   \label{tab:many_comparison}
\end{table}
% In this table, it records how many samples are needed to have correct signal-noise order larger than $85\%$. 
The results of Example~\ref{eg:manyfunct_noise} are recorded in Table~\ref{tab:many_comparison}. From the left to right, it records the performance  using the  normalized inverse range parameter, Sobol GaSP (\cite{oakley2004probabilistic,le2014bayesian}), Sobol (\cite{sobol1990sensitivity}), Sobol2007 S and Sobol2007 T (\cite{tarantola2007estimating}).  All Sobol methods are coded in the  Sensitivity package (\cite{pujol2007sensitivity}).

%but with more time to fit the model
 In Table~\ref{tab:many_comparison}, Sobol method with Monte Carlo method (and its variants) needs much more computer model runs to identify the signals, while Sobol GaSP needs much less runs, consistent with the previous study in  \cite{oakley2004probabilistic}.  The Sobol GaSP is not as good as the  normalized inverse range parameter with the JR prior. One possible reason is that  the Sensitivity package (\cite{pujol2007sensitivity}) utilizes the DiceKriging package (\cite{roustant2012dicekriging}) for the GaSP emulator, which is not as accurate as the robust GaSP emulator  in prediction, discussed in Section~\ref{subsec:emulation}. 

\subsection{Calibration}
\label{subsec:calibration_numerical}
	In this section, we compare the GaSP and S-GaSP calibration for a pedagogic example studied in \cite{bayarri2007framework}.
	\begin{example}
	The sampling model is  $y^F(x)=3.5\exp(-1.7x)+1.5+\epsilon$ with $\epsilon \sim \mathcal N(0,0.3^2)$, and the computer model is $f^M(x,\theta)= 5 \exp(-\theta x)$. Thirty observations are recorded  at 10 different $x_i \in [0,3]$, each with three repeated experiments shown in \cite{bayarri2007framework}.  The goal is to estimate $\theta$ and predict the outputs at $[0,5]$.
	% not only to interpolate the outputs at $x\in [0,3]$, but also to extrapolate the outputs at future values $x\in [3,5]$. 
	\label{example:calibration}
	 \end{example}
	 %This is a challenging task for both the GaSP and S-GaSP models. To formulate it in a mathematical way, the target function is $y^R(x)=3.5\exp(-1.7x)+1.5$ at $x \in \mathcal [0, 5]$, while we only have field data at $x_i \in [0,3]$, for $i=1,...,30$. 
	
  Because the uncertainty of the calibration parameters is important in calibration, sampling from the posterior is typically more preferred than the MLE or posterior mode estimation. The Markov Chain Monte Carlo Algorithm is implemented in RobustCalibration package (\cite{gu2018robustcalibrationpackage}). We compare the GaSP calibration model in (\ref{equ:model_calibration}) with the reference prior and JR prior, based on $S=100,000$ posterior samples with $S_0=20,000$ burn-in samples. As the computer model does not explain the mean of the process, we add a mean discrepancy term to the computer model (i.e. $h(\mathbf x)=1$) for all the models we considered. The mean discrepancy is treated as a part of the computer model because of its interpretability.  	
		\begin{figure}[t]
		\centering
		\begin{tabular}{c}
			\includegraphics[height=.45\textwidth,width=1\textwidth]{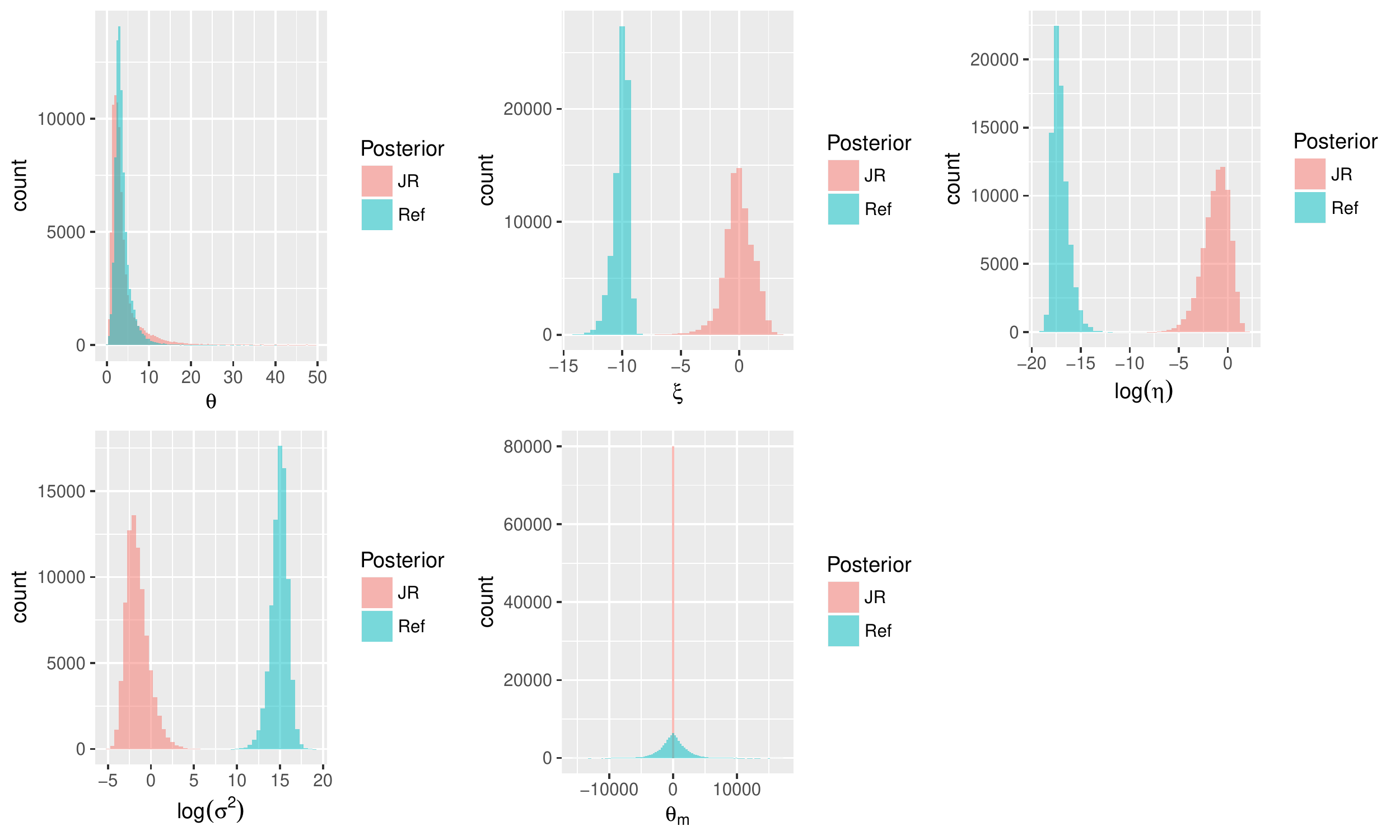}	 
		\end{tabular}
		\caption{Histograms of the posterior samples of the parameters with the JR prior (red boxes) and the reference prior (green boxes) for the Example~\ref{example:calibration}.  }
		\label{fig:eg_posterior_calibration}
	\end{figure}
	
	The posterior samples of the parameters with the reference prior and JR prior are graphed in the Figure~\ref{fig:eg_posterior_calibration}. Since the reference prior has a flatter tail when the log inverse range parameter $\xi \to 0$, the posterior of $\xi$ with the reference prior is much smaller than the one with the JR prior, meaning that the correlation is estimated to be much stronger.  The large correlation by the reference prior leads to the large values of the posterior samples of the variance parameter, shown in the left panel in the second row in Figure~\ref{fig:eg_posterior_calibration}, and consequently, the posterior samples of the mean parameter $\theta_m$ spread widely over $[-2\times 10^4, 2\times 10^4]$, shown in the last panel in Figure~\ref{fig:eg_posterior_calibration}. In comparison, the posterior mean parameter with the JR prior is much more concentrated, because the tail of the density of JR prior is slightly steeper  when the log inverse range parameter $\xi \to 0$, preventing the correlation from being estimated to be too large.  
	
	%the estimated correlation is smaller, as
	
	%posterior samples of the  log inverse range parameter are  and variance parameter are .  

	%	 This is because the tail of the reference prior is flatter when $\xi \to -\infty $, which makes the posteriors of the correlation and variance very large. 
%	 by the flatter tail of the reference prior when $\xi \to -\infty $. Though such flat tail might be fine with emulation, it makes the posterior of the correlation and variance very large. As a consequence, the posterior variance of the mean parameter is also very large, shown in the middle panel in the second row in Figure~\ref{fig:eg_posterior_calibration}. In comparison, the posterior variance of the mean of the JR prior is much smaller.  

%the posterior of the variance  is large and

%Because of the reference prior has slower tail decreasing rate when the estimated correlation gets larger, the 	
		%Because of the presence of the discrepancy function, the true value of $\theta$ is not well defined in this example. In \cite{bayarri2007framework}, the true value of $\theta$ is defined to be $1.7$, but one can define an arbitrary  $\theta$ and choose the discrepancy function to be $\delta_{\theta}(x)=y^R(x)-f^M(x,\theta)-h(x)\beta$, in which the computer model plus the discrepancy function still matches the reality perfectly well. As it might be controversial to argue the true value of $\theta$ in this case, 

	%The results of large variance in the mean parameter  
	%we turn to compare the out of sample prediction of the target function by different models. 
	
	To see the predictive performance of calibration, We test on $n^*=200$ held-out outcomes at $x^*_i$ equally spaced at $[0,5]$ based on the predictive NRMSE in (\ref{equ:NRMSE}) and the  following additional two criteria
	\begin{eqnarray}
	{\rm P_{CI}(95\%)} &=& \frac{1}{n^{*}}\sum\limits_{i = 1}^{n^{*}} 1\{y^R(x^{*}_i)\in {\rm CI}_{i}(95\% )\}, \nonumber\\
	{\rm L_{CI}(95\%)} &=&\frac{1}{n^{*}} \sum\limits_{i = 1}^{n^{*}} \Length\{\rm CI_{i}(95\%)\}, \nonumber
	\end{eqnarray}
	%$\hat y^R(x^{*}_i)$ is the prediction of the reality at $x^*_i$; 
	where ${\rm CI}_{i}(95\% )$ is the $95\%$ posterior credible interval of the reality; and ${\rm L_{CI}(95\%)}$ is the average length of the $95\%$ posterior credible interval. For the results by the reference prior and JR prior,  NRMSE is calculated for two scenarios. In the first scenario, only the calibrated computer model is used for prediction. Both the calibrated computer model and discrepancy function are used in the second scenario. An efficient method should {{have}} relatively low predictive  NRMSE for both scenarios, ${\rm P_{CI}(95\%)}$  close to the $95\%$ nominal level and short average credible interval lengths.

		\begin{figure}[t]
		\centering
		\begin{tabular}{ccc}
			\includegraphics[scale=.285]{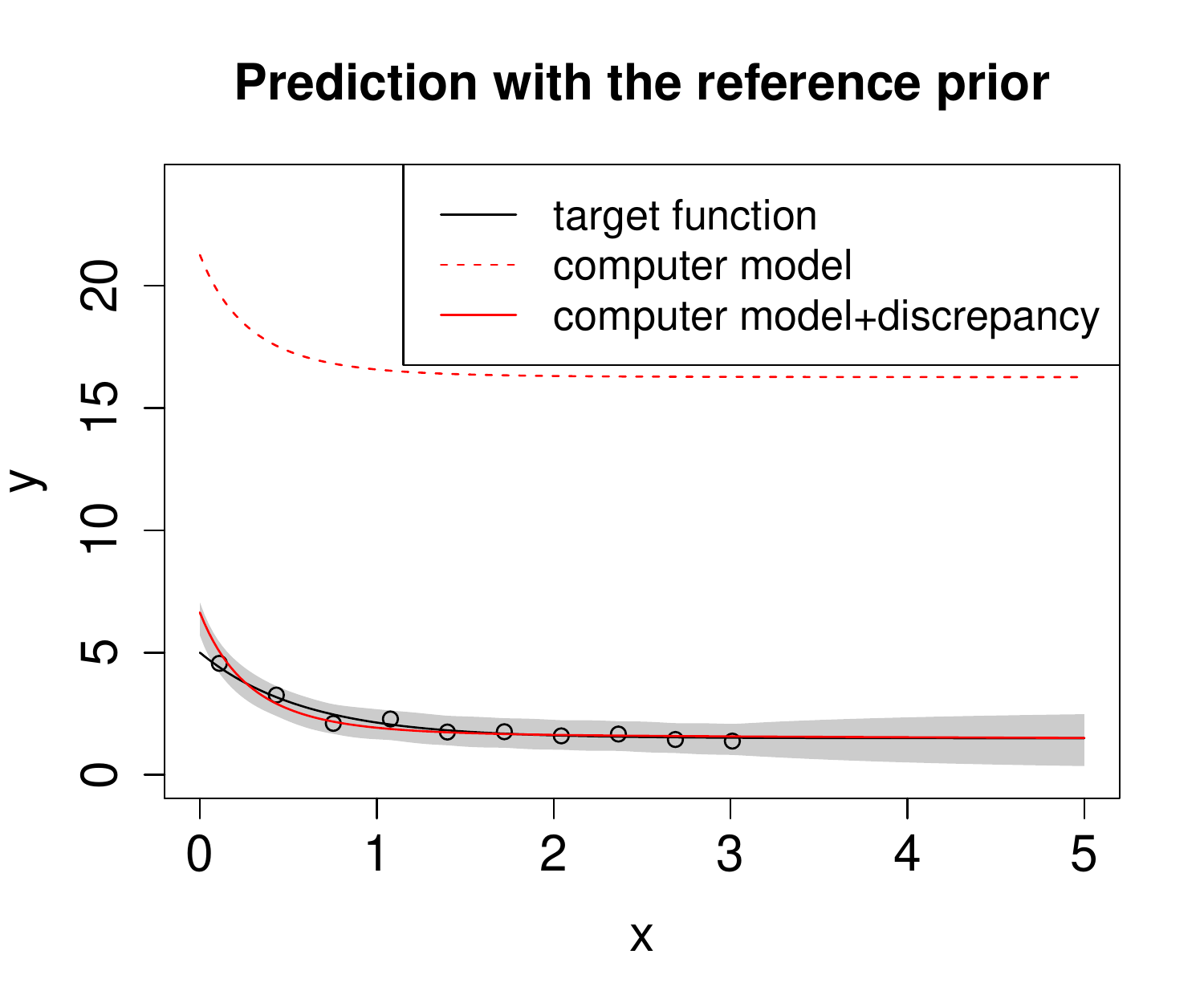}	 
		\includegraphics[scale=.285]{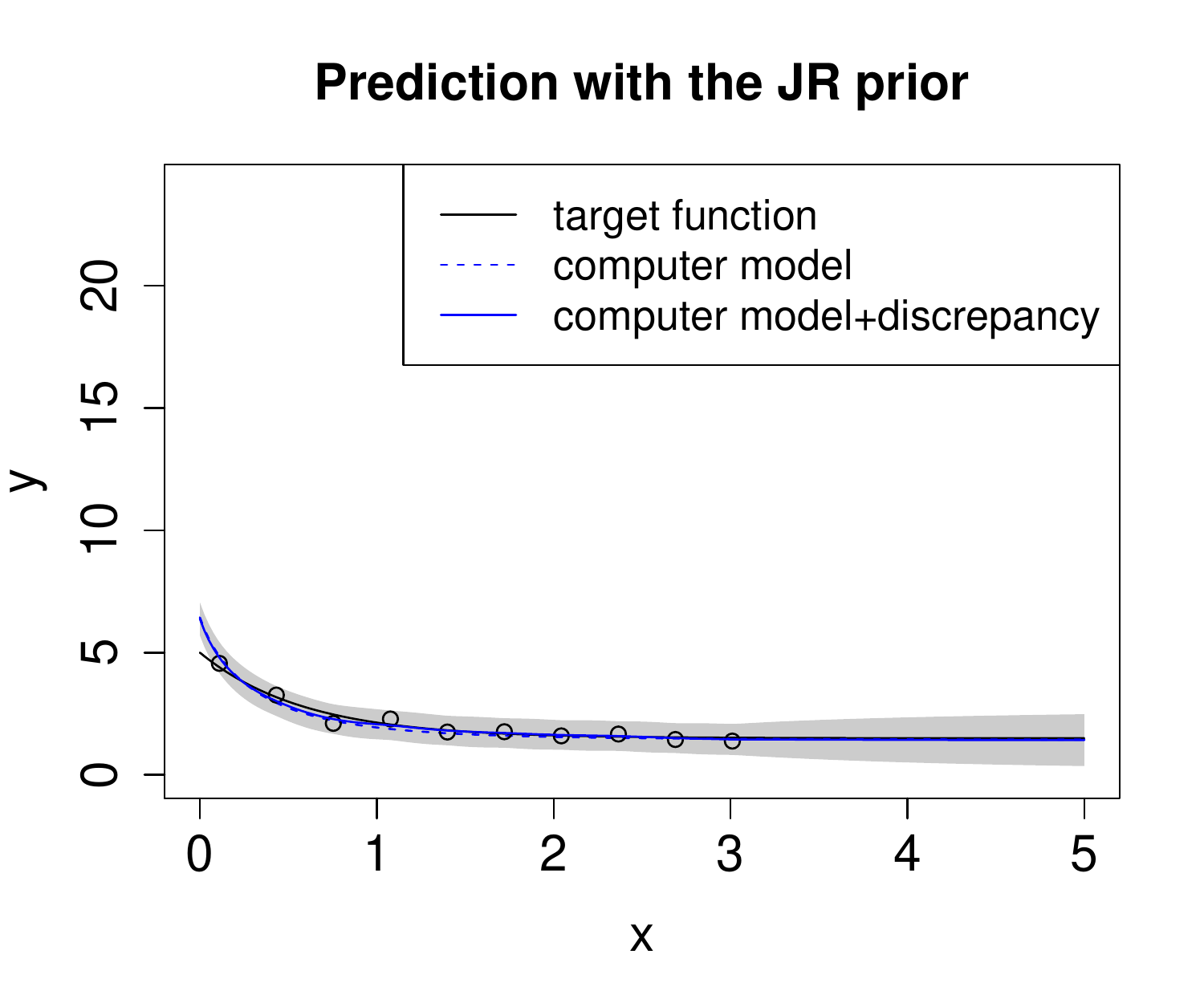}	 
		\includegraphics[scale=.285]{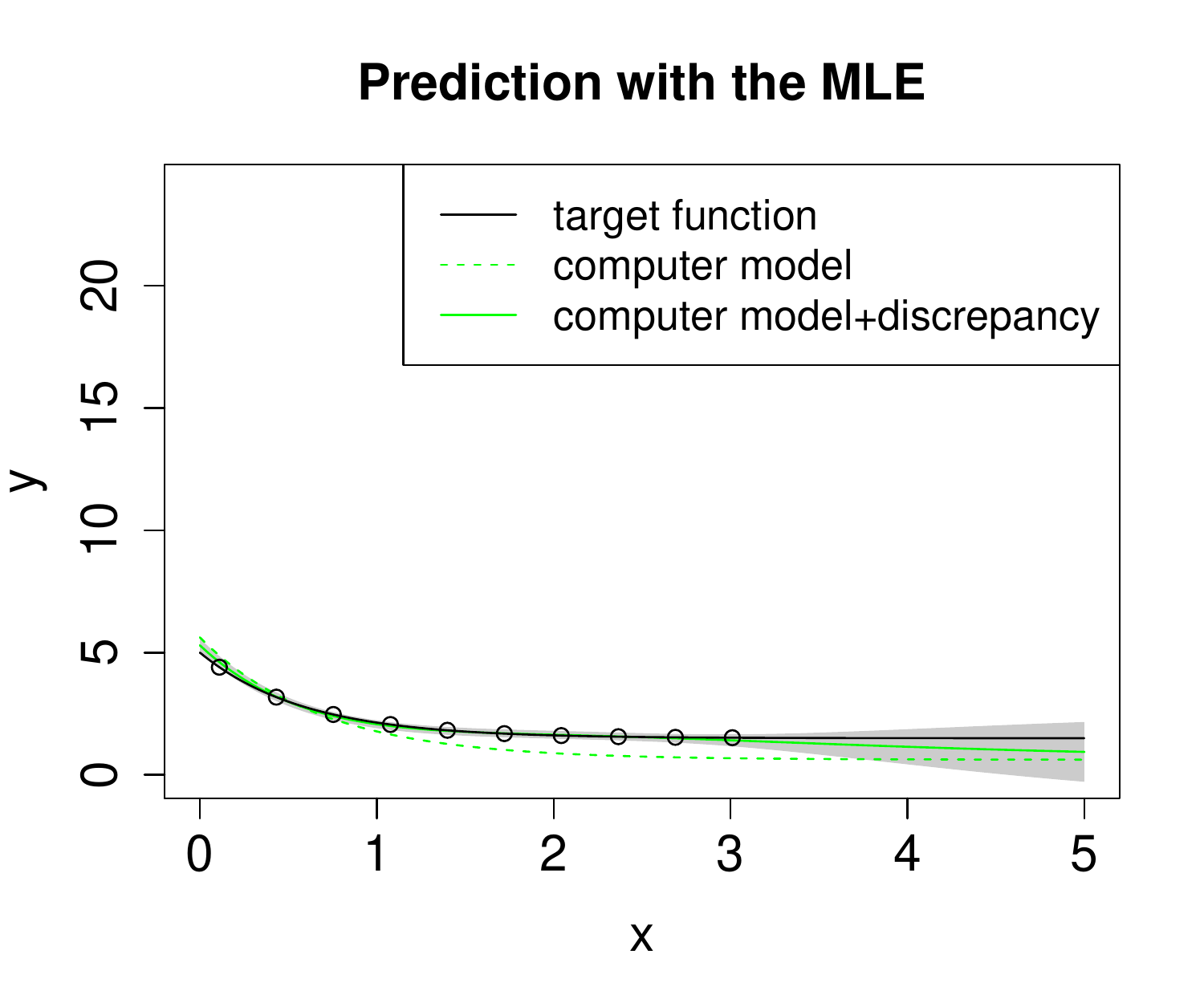}	 	
		\end{tabular}
		\caption{Out of sample prediction for the Example~\ref{example:calibration} with the reference prior (left panel), JR prior (middle panel) and MLE (right panel). The target function is graphed as black {solid} curves and  30 observations are plotted as circles. The colored solid curves are the predictive mean using the calibrated computer model and discrepancy function, while the colored dashed curves are the predictive mean using only the calibrated computer model. The shaded area is the 95\% predictive credible interval for the target function. In the middle  panel, the dash curve and the solid curve almost overlap. }
		\label{fig:eg_prediction_calibration}
	\end{figure}

	\begin{table}[t]
		\caption{\label{tab:predicition_calibration} NRMSE,  ${\rm P_{CI}(95\%)} $ and ${\rm L_{CI}(95\%)} $ by the GaSP calibration with the reference prior, JR prior and MLE for Example~\ref{example:calibration}.}
		\centering
		\begin{tabular}{lrrr}
			\hline
			reference prior & NRMSE& ${\rm P_{CI}(95\%)} $ & ${\rm L_{CI}(95\%)} $ \\
			\hline
			calibrated computer model  & 18 & / & /  \\
			calibrated computer model and discrepancy  & .28 & .88 & .66   \\
			\hline
		       JR prior & NRMSE & ${\rm P_{CI}(95\%)} $ & ${\rm L_{CI}(95\%)} $ \\
			\hline
			calibrated computer model  & .24 & / & /  \\
			calibrated computer model and discrepancy  & .21 & .98 & .92   \\
			\hline
                         MLE    & NRMSE & ${\rm P_{CI}(95\%)} $ & ${\rm L_{CI}(95\%)} $ \\
			\hline
			calibrated computer model  & .72 & / & /  \\
			calibrated computer model and discrepancy  & .24 & .99 & .80   \\
                        \hline
		\end{tabular}
	\end{table}
%the target function well
%is graphed as the dotted red curve in the left panel  in Figure~\ref{fig:eg_prediction_calibration}. It

We compare the prediction with the reference prior and JR prior  in Figure~\ref{fig:eg_prediction_calibration}. Also included is the prediction with the MLE, in which we first maximize over the mean parameter and variance parameter, and then numerically maximize the profile likelihood of the rest of the parameters. 
%We use the low-storage quasi-Newton optimization method (\cite{nocedal1980updating}) with multiple initial starting values. Only two out of ten initial starts converge to parameters with the maximum value of the likelihood we found.
% Indeed, the parameter space is flat with multiple local modes, indicating the MLE should be operated with caution. 
First, the prediction by the calibrated computer model with the reference prior clearly overestimates the mean effect, caused by the posterior samples of the large correlation and variance parameter shown in Figure \ref{fig:eg_posterior_calibration}. In comparison, the prediction by the calibrated computer model with the JR prior is  more accurate. The NRMSE is 18 and 0.24, using the prediction by  the calibrated computer model with the reference prior and JR prior, respectively. 

The prediction combining the calibrated computer model and  discrepancy function is graphed as the colored solid curves in Figure \ref{fig:eg_prediction_calibration}.  The model with the JR prior has a lower NRMSE than the one with the reference prior shown in Table~\ref{tab:predicition_calibration}, and as importantly, produces 95\% posterior credible interval covered around 95\%  of held-out points in the target function. In contrast, the model with the reference prior seems overconfident in their accuracy assessment, caused by the large variance of the posterior mean parameter. 
%(slightly more than)

The prediction of the GaSP calibration with the JR prior is also better than the one with the MLE in terms of NRMSE. For the MLE, the uncertainty of the estimated parameters is typically hard to quantify, when the sample size is small. Besides, the likelihood of the calibration parameter normally has multiple local modes, indicating the MLE should be operated with caution.

\section{Concluding remarks}
\label{sec:conclusion}
%with the robust parameterization
We have introduced the JR prior for emulation, calibration and   variable selection in UQ. This prior performs as well as the reference prior in emulation, because the marginal posterior mode estimation with the JR prior is robust. The JR prior is considerably faster as the closed form derivative is easy to compute. Furthermore, the marginal posterior mode with the JR prior can identify the inert inputs with no extra computational cost, whereas the marginal posterior mode of the reference prior and other priors may not be both robust in posterior mode estimation and accurate in identifying the inert inputs by the posterior mode. 
% Both the reference prior and some ordinarily used marginal prior
 In calibration,  the JR prior is helpful for parameter identification  with the current choice of the prior parameters, which avoids the correlation from being estimated to be too large.  The choice of default prior parameters is still an open problem.  A principle way of determining the prior parameters is needed for the tradeoff in predictive accuracy and identifiability of parameters in calibration. 
% It is an open problem to choose a prior parameter in a principle way.  
% especially the parameters in the covariance function. 

% One motivation is from the continuity of a function. 

%First of all, a computationally expensive computer model is sometimes modeled as an unknown function via a GaSP, where any finite outputs of the function are then assumed to follow a multivariate normal distribution.   

% that are unobservable in experiments, and one hopes 

%%Another fundamental problem in UQ is called {the inverse problem} or {calibration}.  Some parameters of the computer models are unobservable, and one often calibrate computer model parameters based a set of field/experimental data to

%A typical way is to model the unknown function via a Gaussian stochastic process (GaSP), because 

 %Such problem is typically called emulation. Some parameters, such 
 
 %Two fundamental problems in UQ are called the emulation and calibration of the mathematical models. The emulation problem 

%\begin{supplement}
%\sname{Supplement A}\label{suppA} 
%\stitle{Title of the Supplement A}
%\slink[url]{http://www.some-url-address.org/dowload/0000.zip}
%\sdescription{Add description for supplement material.}
%\end{supplement}
%
%
%The following are some other citations to check that the bibliographc style
%file is working correctly: \citet{akaike}, \citet*{akivarsq}, \citet{dyke},
%\citet{greene}, \citet*{kstuart}, \citet{hilbetech}, \citet{hilbe},
%\citet{hilbeglm}, \citet{maddalacntrsq}, and \citet*{companion}.  
%

\bibliographystyle{ba}
\bibliography{References_2018}

\begin{acknowledgement}
The research of Mengyang Gu was part of his PhD thesis at Duke University. The author thanks Jim Berger for his guidance and   insightful discussions.
%And this is an acknowledgements section with a heading that was produced by the
%$\backslash$section* command. Thank you all for helping me writing this
%\LaTeX\ sample file. See \ref{suppA} for the supplementary material example.
\end{acknowledgement}

\end{document}